\theoremstyle{definition}
\newcolumntype{L}[1]{>{\raggedright\arraybackslash}p{#1}}
\newcolumntype{T}[1]{>{\ttfamily\raggedright\arraybackslash}p{#1}}
\begin{document}

\preprint{APS/123-QED}

\title{
Density functional theory for core-level X-ray absorption
}

\author{Seokkyu An}

\author{Taisuke Ozaki}%

\affiliation{%
 Institute for Solid State Physics, The University of Tokyo, Kashiwa 277-8581, Japan
}%

\date{\today}

\begin{abstract}
We establish a rigorous density functional theory (DFT) framework for core-level X-ray absorption spectroscopy (XAS) by formulating a constrained search for core-excited states based on the Gunnarsson-Lundqvist theorem.
Within this framework, the explicit-core $\Delta$SCF scheme enables shift-free absolute edge alignment and a consistent treatment of L/M edges with spin-orbit-resolved projectors.
In addition, by exploiting dipole selection rules, we recast the evaluation of the dipole matrix elements, which otherwise requires many independent Slater determinant calculations, into a compact single determinant form.
This reduces the computational scaling from $\mathcal{O}(N^4)$ to $\mathcal{O}(N^3)$, where $N$ is the number of electrons, without introducing additional approximations.
Across representative \ce{C}, \ce{B}, \ce{O}, and \ce{Li} K-edge benchmarks in molecules and solids, the method reproduces line shapes, polarization anisotropies, and absolute onsets without empirical shifts, providing a robust and scalable route to quantitatively reliable XAS simulations within DFT.
\end{abstract}

\maketitle

\newcommand{\ncore}{\hat{n}_\mathbf{c}}
\newcommand{\Hamiltonian}{\hat{H}}
\newcommand{\corestate}{\ket{\phi_\mathbf{c}}}
\newcommand{\polarizedcorestate}{\ket{\hat{O}\phi_\mathbf{c}}}
\newcommand{\createcore}{\hat{a}^\dagger_\mathbf{c}}

\newcommand{\Finalstate}{\Psi_\mathrm{f}}
\newcommand{\Initialstate}{\Psi_\mathrm{i}}
\newcommand{\FinalKSstate}{\Phi_\mathrm{f}}
\newcommand{\InitialKSstate}{\Phi_\mathrm{i}}
\newcommand{\finalstate}[1]{\psi_\mathrm{f#1}}
\newcommand{\initialstate}[1]{\psi_\mathrm{i#1}}
\newcommand{\dipolematrix}{\braket{\Finalstate|\hat{O}|\Initialstate}}
\newcommand{\Newmatrixelement}{\braket{\FinalKSstate,\phi_\mathbf{c}|\InitialKSstate,\hat{O}\phi_\mathbf{c}}}

\section{Introduction}
X-ray absorption spectroscopy (XAS) is a standard probe of local electronic structure owing to its element and orbital specificity, compatibility with \textit{in situ} measurements, and sensitivity to bonding and symmetry \cite{Henderson2014_RMG_XANES,Severino2024NatPhoton, Lindblad2022_PRA_NOplus_NEXAFS, Fan2025_AdvSci_OperandoXAS, Frati2020_ChemRev_O_Kedge}. While band structures and projected densities of states (PDOS), which are common diagnostics, can indicate trends, the reproduction of line shapes, absolute edge positions, polarization dependence, and multiplet features requires a first-principles framework that treats many-body relaxation and enforces dipole selection rules on equal footing \cite{Timrov2020_LaFeO3_Ni_PRResearch,Preston2007DyN_SmN_PRB,Kang2005_RMnO3_PES_XAS_PRB}. Consequently, \textit{ab initio} simulations are indispensable both for the quantitative prediction of XAS, enabling direct comparison to experimental data, and for the microscopic interpretation of many-body relaxation and selection-rule-driven features.

Among the available strategies, $\Delta$SCF methods based on  density functional theory (DFT) \cite{Hohenberg1964PR} are now widely used as a robust and cost-effective approach \cite{Liang2017PRL,Liang2018PRB,Roychoudhury2021_ACSAMI_LiK,Meng2024_PRM_TiK_Benchmark,Woicik2020_PRB_CoreHole,Roychoudhury2023PRB,JanaHerbert2023_JCTC_TP, Roychoudhury2022_PRB_XES_Polarization}.
By explicitly constructing a core-excited state and relaxing the electron density self-consistently, the $\Delta$SCF method captures nonperturbative orbital relaxation beyond response-based treatments (e.g., Bethe-Salpeter equation (BSE) \cite{Vinson2011_BSE_CoreXAS_PRB,Vinson2012_TM_Ledge_BSE_PRB} or linear-response time-dependent DFT (LR-TDDFT) \cite{Besley2010_TDDFT_CoreSpectroscopy_PCCP,Andrade2015_Octopus_RealSpace_PCCP}). 
In practice, using single Kohn-Sham (KS) Slater determinants to evaluate dipole matrix elements often yields improved edge positions and line shapes at modest cost \cite{Liang2017PRL,Liang2018PRB,Roychoudhury2021_ACSAMI_LiK,Meng2024_PRM_TiK_Benchmark}.
However, the lack of a clear definition of what qualifies as `core-excited DFT' has limited its generality and extensibility.
Many prior implementations introduce the core-hole pseudopotential (PP) with positive charge \cite{Liang2017PRL, Liang2018PRB, Roychoudhury2021_ACSAMI_LiK, Roychoudhury2023PRB}, resulting in the enforcement of a spherically symmetric core channel. While this suffices at K edges, it is problematic for $j$-resolved L/M edges. 
More importantly, when different PPs are used for the initial and final states, the two calculations no longer share a common energy reference; consequently, absolute edge positions typically must be corrected by ad hoc rigid shifts, which hinders comparisons across different compounds.
In addition, despite recent progress \cite{Liang2018PRB,Roychoudhury2023PRB}, the evaluation of the dipole matrix element between Slater determinants remains a non-negligible bottleneck for large supercells. 
Thus, a useful core-excited DFT framework should
(i) define a core-excited state in a variationally controlled
manner while preserving a common energy reference, and
(ii) provide an intensity formula whose computational
scaling and structure are suitable for large-scale simulations. 
To address these issues, in this paper we present two
developments, summarized below.

\textbf{(i) Explicit-core $\Delta$SCF via a penalty projector.}
We establish a formally rigorous DFT framework for the lowest core-excited state by formulating a constrained search based on the Gunnarsson-Lundqvist theorem \cite{GL}.
Then, the combination with DFT yields the penalized KS Hamiltonian that reproduces the lowest core-excited energy,
\begin{equation}
  \hat h_{\Delta} \;=\; \hat h_\mathrm{KS} \;+\; \Delta\, \ncore, 
  \qquad \Delta \gg 1,
  \label{eq:penalty-intro}
\end{equation}
where $\hat{h}_\text{KS}$ is the conventional KS Hamiltonian and $\ncore$ is the number operator for a chosen atomic core orbital $\corestate$ (see Sec. \ref{sec:Delta-SCF}). 
The second penalty term has appeared in several constrained(c)-DFT studies, even if not always stated in this formal way \cite{Derricotte2015_PCCP_OCDFT_XAS,Ozaki2017PRL}. 
Then, by treating the core explicitly rather than using core-hole PPs, we preserve absolute energy alignment, and seamlessly generalize to non-collinear magnetism and to L/M edges by choosing a spinor $j$-resolved $\corestate$. 

\textbf{(ii) Selection-rule–aware reformulation of absorption intensities.}
In the most recent work, the dipole matrix element was reformulated by the following \cite{Roychoudhury2023PRB}:
\begin{equation}
\dipolematrix= \sum_{n\in \text{occ}}  o_{n,\mathbf{c}}^{(\text{CHB})}\braket{\Phi_{\mathrm{f}}^{(n\to \mathbf{c})}| \InitialKSstate },
\end{equation}
so-called core-hole basis (CHB) treatment (see Sec. \ref{sec:reformulation}).
While it overcomes the shortcoming of the sudden approximation and reduces $\sim$50~\% computational time compared to previous ground-state basis (GSB) \cite{Liang2017PRL,Liang2018PRB}, its time complexity scales as $\mathcal{O}(N^4)$ for large systems, where $N$ is the number of electrons.
To further improve scalability, we reformulate the matrix element into a more compact and symmetry-transparent form that eliminates explicit sums over many intermediate Slater determinants,
\begin{equation}\label{eq:pre_new_formalism}
\dipolematrix=-\Newmatrixelement,
\end{equation}
where $\ket{\Phi_{\mathrm f},\phi_\mathbf{c}}$ and $\ket{\Phi_{\mathrm i},\hat{O}\phi_\mathbf{c}}$ are $(N{+}1)$-electron Slater determinants obtained by adding $\corestate$ and $\ket{\hat{O}\phi_\mathbf{c}}$ (the ``polarized core'' state). Compared to previous treatments \cite{Liang2017PRL,Liang2018PRB,Roychoudhury2023PRB}, Eq. \eqref{eq:pre_new_formalism} yields a simpler computational structure with $\mathcal{O}(N^{3})$, enabling efficient calculations for large systems. Additionally, by exploiting linear properties of the determinant, we eliminate additional iterations over many final-state configurations, allowing a single determinant evaluation per each targeted state.

The structure of this paper is as follows: In Sec. II, we present the core-excited DFT and reformulation of the matrix elements. Sec. III provides a detailed discussion on the implementation of the method. A series of benchmark calculations are presented in Sec. IV. Finally, in Sec. V, we summarize our study.

\section{Theory}

\subsection{$\Delta$SCF with an explicit core projector}\label{sec:Delta-SCF}
When a high-frequency X-ray photon impinges on a material, a deeply bound core electron can absorb the photon and be promoted into an unoccupied state above the Fermi level. X-ray absorption spectroscopy (XAS) measures the corresponding absorption intensity $\sigma(\omega)$, given by Fermi’s golden rule \cite{deGrootKotani2008},
\begin{equation}\label{eq:Fermi_golden_rule}
  \sigma(\omega)\;\propto\sum_\text{f}|\dipolematrix|^2\,
  \delta\big(\hbar\omega-(E_\text{f}-E_\text{i})\big),
\end{equation}
where $\ket\Initialstate$ and $\ket\Finalstate$ are many-body initial (ground) and final (core-excited) eigenstates and 
$\hat O = \hat{\boldsymbol\epsilon}\!\cdot\!\hat{\bm{p}}$ is the dipole operator with polarization $\hat{\boldsymbol\epsilon}\in\{\hat{\mathrm{x}},\hat{\mathrm{y}},\hat{\mathrm{z}}\}$.
The relevant final states with the core hole are characterized by $\braket{\ncore} = 0$, whereas the initial state $\braket{\ncore} = 1$, where $\ncore$ is the number operator of the atomic core state $\corestate$. Inspired by this observation, we assume that $\ncore$ commutes with the many-body Hamiltonian, 
\begin{equation}
  [\,\hat H,\;\ncore\,]\simeq0,
  \label{eq:comm-assump}
\end{equation}
which can be justified when the interatomic hybridization of the core state is negligible because of its spatially and energetically well-localized nature.
Then, the following Theorem 1 shows that the variational principle in the sub-Hilbert space where $\braket{\ncore}=0$ gives the lowest core-excited energy, and Theorem 2 shows that the indirect minimization provides the same result. 

\newtheorem{theorem}{Theorem}

\begin{theorem}[Gunnarsson--Lundqvist (GL) \cite{GL}]
Let $\hat{H}$ be the many-body Hamiltonian and $\hat{{A}}$ an operator such that 
$[\hat{H},\hat{{A}}]=0$. 
For a fixed eigenvalue $\lambda$ of $\hat{{A}}$, let 
$\mathbb{H}_\lambda:=\{\,|\Psi\rangle:\hat{{A}}|\Psi\rangle=\lambda|\Psi\rangle\,\}$ 
denote the corresponding subspace and let 
$E^{(\lambda)}_0:=\min_{|\Psi\rangle\in\mathbb{H}_\lambda}\langle\Psi|\hat{H}|\Psi\rangle$.
Then, $E^{(\lambda)}_0$ coincides with the lowest excited energy, and the corresponding eigenstate satisfies $\braket{\hat{{A}}}=\lambda$. 
\label{theorem1}
\end{theorem}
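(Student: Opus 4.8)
The plan is to reduce the constrained minimization over $\mathbb{H}_\lambda$ to an ordinary Rayleigh--Ritz problem on an invariant subspace. The essential first step is to observe that $[\hat H,\hat A]=0$ forces $\hat H$ to preserve each eigenspace of $\hat A$: if $|\Psi\rangle\in\mathbb{H}_\lambda$, then $\hat A\,\hat H|\Psi\rangle=\hat H\,\hat A|\Psi\rangle=\lambda\,\hat H|\Psi\rangle$, so $\hat H|\Psi\rangle\in\mathbb{H}_\lambda$. Hence the restriction $\hat H_\lambda:=\hat H|_{\mathbb{H}_\lambda}$ is a well-defined self-adjoint operator on the closed subspace $\mathbb{H}_\lambda$, and the constrained functional $\langle\Psi|\hat H|\Psi\rangle$ with $|\Psi\rangle\in\mathbb{H}_\lambda$ is literally the expectation functional of $\hat H_\lambda$.

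Next I would apply the standard variational principle to $\hat H_\lambda$. By Rayleigh--Ritz, $E_0^{(\lambda)}=\min_{|\Psi\rangle\in\mathbb{H}_\lambda}\langle\Psi|\hat H|\Psi\rangle/\langle\Psi|\Psi\rangle$ equals the bottom of the spectrum of $\hat H_\lambda$, and any normalized minimizer $|\Psi_\star\rangle$ satisfies the associated Euler--Lagrange equation $\hat H_\lambda|\Psi_\star\rangle=E_0^{(\lambda)}|\Psi_\star\rangle$. Since $\hat H_\lambda$ acts as $\hat H$ on $\mathbb{H}_\lambda$, this says $|\Psi_\star\rangle$ is a genuine eigenstate of $\hat H$ with eigenvalue $E_0^{(\lambda)}$; and because $|\Psi_\star\rangle\in\mathbb{H}_\lambda$ by construction, it automatically obeys $\langle\hat A\rangle=\langle\Psi_\star|\hat A|\Psi_\star\rangle=\lambda$.

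It remains to identify $E_0^{(\lambda)}$ with ``the lowest excited energy.'' Since $\hat H$ and $\hat A$ commute they admit a joint eigenbasis $|\Psi_n^{(\mu)}\rangle$ with $\hat H|\Psi_n^{(\mu)}\rangle=E_n^{(\mu)}|\Psi_n^{(\mu)}\rangle$ and $\hat A|\Psi_n^{(\mu)}\rangle=\mu|\Psi_n^{(\mu)}\rangle$; the spectrum of $\hat H_\lambda$ is precisely $\{E_n^{(\lambda)}\}_n$, so $E_0^{(\lambda)}=\min_n E_n^{(\lambda)}$ is the lowest $\hat H$-eigenvalue compatible with $\hat A=\lambda$, i.e. the lowest state in that symmetry sector, and when $\lambda$ differs from the ground-state label of $\hat A$ this is an excited state of the full system. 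Specializing to $\hat A=\ncore$ and $\lambda=0$ then yields the lowest core-excited energy, which is precisely the quantity targeted by the $\Delta$SCF construction.

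I expect the only real difficulty to be functional-analytic housekeeping rather than anything conceptual. One must assume that the bottom of the spectrum of $\hat H_\lambda$ is an attained eigenvalue (a discrete ground state of the symmetry block), so that a genuine minimizer exists rather than merely an infimizing sequence; and one must read the hypothesis $[\hat H,\hat A]=0$ on a common dense domain where the manipulations above are legitimate, bearing in mind that in the application this is only the approximate relation of Eq.~\eqref{eq:comm-assump}, so the conclusion inherits the same controlled approximation. These are exactly the standing assumptions under which the ordinary ground-state variational principle is invoked, so I would state them and move on; the algebraic content is entirely the two-line invariance computation.
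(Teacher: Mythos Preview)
Your proposal is correct and is essentially the same argument the paper uses: the paper proves Theorem~\ref{theorem1} as a by-product of the proof of Theorem~\ref{theorem2} (Eq.~\eqref{eq:theorem2_proof}) by setting all $d_j=0$, which amounts to expanding in the joint eigenbasis guaranteed by $[\hat H,\hat A]=0$ and minimizing $\sum_i|c_i|^2 E_i$ over $\mathbb{H}_\lambda$---precisely your invariance-plus-Rayleigh--Ritz step. Your presentation is slightly more explicit about the invariant-subspace structure and the functional-analytic caveats, but there is no substantive difference in route.
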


\begin{theorem}[Penalty functional]\label{thm:penalty}
Let $\hat{P}$ be the projector onto $\mathbb{H}_\lambda$ and $\hat{Q}:=\hat{\mathbf{1}}-\hat{P}$.
For $\Delta>0$, define the penalized Hamiltonian 
$\hat{H}_\Delta:=\hat{H}+\Delta\, \hat{Q}$.
Then, for sufficiently large $\Delta$,
the ground state energy of $\hat{H}_\Delta$ coincides with the lowest excited energy, and the corresponding eigenstate satisfies $\braket{\hat{P}}=1$.
\label{theorem2}
\end{theorem}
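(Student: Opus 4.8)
The plan is to exploit the fact that $[\hat H,\hat A]=0$ forces $\hat H$ to be block-diagonal with respect to the orthogonal decomposition $\mathcal{H}=\mathbb{H}_\lambda\oplus\mathbb{H}_\lambda^{\perp}$, while the penalty term $\Delta\hat Q$ is \emph{by construction} diagonal in the same decomposition (equal to $0$ on $\mathbb{H}_\lambda$ and to $\Delta$ on $\mathbb{H}_\lambda^{\perp}$). First I would record that $\hat P$ is the spectral projector of $\hat A$ associated with the eigenvalue $\lambda$ — in finite dimensions literally a polynomial in $\hat A$ via Lagrange interpolation — so that $[\hat H,\hat A]=0$ implies $[\hat H,\hat P]=[\hat H,\hat Q]=0$. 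Consequently $\hat H_\Delta=\hat H+\Delta\hat Q$ leaves both $\mathbb{H}_\lambda$ and $\mathbb{H}_\lambda^{\perp}$ invariant, and its spectrum is the disjoint union of the spectra of the two restrictions.

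Next I would evaluate those restrictions. On $\mathbb{H}_\lambda$ one has $\hat Q=0$, hence $\hat H_\Delta\big|_{\mathbb{H}_\lambda}=\hat H\big|_{\mathbb{H}_\lambda}$, whose lowest eigenvalue is $E^{(\lambda)}_0=\min_{|\Psi\rangle\in\mathbb{H}_\lambda}\langle\Psi|\hat H|\Psi\rangle$, which by Theorem~\ref{theorem1} is precisely the lowest (core-)excited energy. On $\mathbb{H}_\lambda^{\perp}$ one has $\hat Q=\hat{\mathbf 1}$, hence $\hat H_\Delta\big|_{\mathbb{H}_\lambda^{\perp}}=\hat H\big|_{\mathbb{H}_\lambda^{\perp}}+\Delta$, whose lowest eigenvalue is $E_{\min}^{\perp}+\Delta$ with $E_{\min}^{\perp}:=\min_{|\Psi\rangle\in\mathbb{H}_\lambda^{\perp}}\langle\Psi|\hat H|\Psi\rangle$. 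Therefore the ground-state energy of $\hat H_\Delta$ equals $\min\{E^{(\lambda)}_0,\,E_{\min}^{\perp}+\Delta\}$.

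The conclusion then follows by taking $\Delta$ large: as soon as $\Delta>E^{(\lambda)}_0-E_{\min}^{\perp}$ (a finite threshold, because $\hat H$ is bounded below on the physical Hilbert space, so $E_{\min}^{\perp}>-\infty$), the minimum is attained at $E^{(\lambda)}_0$, the entire ground eigenspace of $\hat H_\Delta$ lies inside $\mathbb{H}_\lambda$, and every ground state obeys $\hat P|\Psi\rangle=|\Psi\rangle$, i.e. $\langle\hat P\rangle=1$. Because $\hat H_\Delta$ is block-diagonal, the strict inequality $E^{(\lambda)}_0<E_{\min}^{\perp}+\Delta$ also rules out any ground state that mixes the two sectors, so $\langle\hat P\rangle=1$ holds unambiguously even when $\hat H\big|_{\mathbb{H}_\lambda}$ has a degenerate lowest level.

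The algebra here is light; the real care goes into the hypotheses that make the picture clean. I expect the main obstacle to be the transition from the idealized statement to the practical scheme of Eq.~\eqref{eq:penalty-intro}: the theorem presumes $[\hat H,\hat A]=0$ exactly, whereas in the application $\hat A=\ncore$ and the commutator only \emph{approximately} vanishes, Eq.~\eqref{eq:comm-assump}. I would therefore accompany the proof with a perturbative remark showing that a nonzero $[\hat H,\ncore]$ merely admixes an $\mathcal{O}\!\big(\lVert[\hat H,\ncore]\rVert/\Delta\big)$ component of $\mathbb{H}_\lambda^{\perp}$ into the ground state of $\hat H_\Delta$, so that the error in both the energy and in $\langle\hat P\rangle$ is controlled and vanishes as $\Delta\to\infty$ provided the hybridization is small; spelling out this bound is what ties Theorem~\ref{theorem2} to the numerics rather than the proof of the theorem as stated.
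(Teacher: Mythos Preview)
Your proposal is correct and follows essentially the same approach as the paper: the paper expands an arbitrary state in a simultaneous eigenbasis of $\hat H$ and $\hat P$, evaluates $\langle\hat H_\Delta\rangle=\sum_i|c_i|^2E_i+\sum_j|d_j|^2(E_j+\Delta)$, and minimizes to obtain $\min\{\min_iE_i,\min_j(E_j+\Delta)\}$, which is exactly your block-diagonal argument in variational rather than operator-theoretic language. Your additional care in spelling out why $[\hat H,\hat A]=0\Rightarrow[\hat H,\hat P]=0$, the explicit threshold $\Delta>E_0^{(\lambda)}-E_{\min}^{\perp}$, and the perturbative remark on the approximate commutator of Eq.~\eqref{eq:comm-assump} go beyond what the paper provides, but the core logic is the same.
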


\begin{proof}[Proof of Theorem \ref{theorem2}]
\mbox{}\\[0.2em]
By the assumption $[\hat{H},\hat{{A}}]=0$, there exists a basis set $\{\ket{\Psi_{i,P}}\}\cup\{\ket{\Psi_{j,Q}}\}$, which satisfies $\Hamiltonian\ket{\Psi_{i,P}} =E_i\ket{\Psi_{i,P}}$ and $\hat{P}\ket{\Psi_{i,P}} =\ket{\Psi_{i,P}}$ ($\ket{\Psi_{j,Q}}$ defined similarly). Then, for any state $\ket\Psi=\sum_i{c_i\ket{\Psi_{i,P}}}+\sum_j{d_j\ket{\Psi_{j,Q}}}$ with $\sum_i|c_i|^2+\sum_j|d_j|^2=1$, the following variational principle holds:
\begin{align}\label{eq:theorem2_proof}
E_\mathrm{f}
&:= \min_{\ket\Psi}\braket{\Psi|\hat{H}_\Delta|\Psi}\nonumber\\
&=\min_{\{c_i,d_j\}}\Big(\sum_i|c_i|^2 E_i+\sum_j|d_j|^2 (E_j+\Delta)\Big)\nonumber \\
&=\min\!\left\{\min_i E_i,\;\min_j(E_j+\Delta)\right\}\nonumber\\
&=\min_i E_i \quad\text{if }\min_i E_i < \min_j (E_j + \Delta) .
\end{align}
For second line, the variational principle is applied to the two cases that the minimum energy appears in $E_i$ and $E_j+\Delta$, respectively. 
This proves Theorem 2 and Theorem 1 by setting $d_j=0$ for all $j$.
\end{proof}
\noindent Consequently, by choosing $\hat{Q}=\ncore$, the ground state of the penalized many-body Hamiltonian,
\begin{equation}\label{eq:penalized many-body Hamiltonian}
\Hamiltonian_\Delta :=\Hamiltonian +\Delta\ncore
\end{equation}
corresponds to the lowest core-excited state which satisfies $\braket{\ncore}=0$.
Next, Theorem \ref{thm:levy} shows that Eq. \eqref{eq:penalized many-body Hamiltonian} is compatible with DFT \cite{Hohenberg1964PR}.

\begin{theorem}[Density functional (Levy's constrained search \cite{Levy1982_PRA_ConstrainedSearch})]\label{thm:levy}
For $\hat{H}_\Delta=\hat T+\hat V_\mathrm{ee}+\hat V_{\rm ext}+\Delta \hat Q$ with the fixed projector $\hat Q$,
the lowest core-excited energy is the minimum of the density functional, 
\begin{align}
E_\mathrm{f}
&=\min_{n}\Big\{\,F_\Delta[n]+\int v_{\rm ext}(\mathbf r)\,n(\mathbf r)\,d\mathbf r\,\Big\},\\
F_\Delta[n]&:=\min_{\Psi\to n}\langle\Psi|\,\hat T+\hat V_\mathrm{ee}+\Delta \hat Q\,|\Psi\rangle,
\end{align}
where $F_\Delta[n]$ is universal (independent of $v_{\rm ext}$) for a fixed $\hat Q$.
\end{theorem}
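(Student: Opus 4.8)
The plan is to obtain the statement as a direct transcription of Levy's constrained search to the penalized Hamiltonian $\hat{H}_\Delta$, taking Theorem~\ref{thm:penalty} as the point of departure. That theorem already establishes that, for $\Delta$ large enough, the lowest core-excited energy is the unconstrained Rayleigh--Ritz value $E_\mathrm{f}=\min_{\Psi}\langle\Psi|\hat{H}_\Delta|\Psi\rangle$, with the minimum taken over all normalized antisymmetric many-electron states in the domain of $\hat{H}_\Delta$. All that remains is to reorganize this single minimization into the two-stage form asserted by the theorem and to separate out the external-potential piece.

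First I would split the search as $E_\mathrm{f}=\min_{n}\min_{\Psi\to n}\langle\Psi|\hat{T}+\hat{V}_\mathrm{ee}+\hat{V}_\mathrm{ext}+\Delta\hat{Q}|\Psi\rangle$; this is legitimate because each admissible $\Psi$ produces a density $n$, and the two nested minimizations together range over exactly the same set of states as the original one. Next I would use that $\langle\Psi|\hat{V}_\mathrm{ext}|\Psi\rangle=\int v_\mathrm{ext}(\mathbf r)\,n(\mathbf r)\,d\mathbf r$ depends on $\Psi$ only through its density, so this term is constant throughout the inner minimization and factors out of it. What is left inside is precisely $F_\Delta[n]=\min_{\Psi\to n}\langle\Psi|\hat{T}+\hat{V}_\mathrm{ee}+\Delta\hat{Q}|\Psi\rangle$, which gives $E_\mathrm{f}=\min_n\{F_\Delta[n]+\int v_\mathrm{ext}\,n\}$. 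Universality of $F_\Delta$ is then immediate: none of $\hat{T}$, $\hat{V}_\mathrm{ee}$, $\hat{Q}$ involves $v_\mathrm{ext}$, so for a fixed core projector $\hat{Q}$ the functional is the same for every external potential. I would also remark that by Theorem~\ref{thm:penalty} any minimizer $\Psi^\star$ obeys $\langle\hat{P}\rangle=1$, i.e. $\langle\ncore\rangle=0$, so the optimal density is that of a genuine core-excited state rather than of the ground state.

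The step deserving the most care --- and the one I expect to be the main obstacle --- is making the density search well posed: one must fix the admissible class of densities (the $N$-representable densities, those arising from some antisymmetric state of finite kinetic energy) and invoke the Harriman/Lieb construction to ensure that the map $\Psi\mapsto n_\Psi$ is onto that class, so that the two-stage minimization truly coincides with the one-stage Rayleigh--Ritz minimum; passing to Lieb's density-matrix formulation additionally secures lower semicontinuity and attainment of the inner infimum. Two further points I would state explicitly. First, since $0\le\hat{Q}\le\hat{\mathbf 1}$ is bounded, the penalty shifts every Rayleigh quotient by at most $\Delta$, so $F_\Delta[n]$ inherits the finiteness and regularity of the ordinary Levy functional. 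Second, the penalty term is \emph{not} itself a functional of $n$: $\langle\Psi|\ncore|\Psi\rangle$ depends on the full one-body density matrix, not on the diagonal density alone, so it cannot be moved outside $F_\Delta$. This is harmless, because the only term that must be a pure functional of $n$ for the universality argument is $\langle\hat{V}_\mathrm{ext}\rangle$, which manifestly is; the penalty simply remains part of the (universal but $\hat{Q}$-dependent, and in general non-additive) functional $F_\Delta[n]$ exactly as written.
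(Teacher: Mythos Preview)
Your proposal is correct and follows essentially the same route as the paper: start from $E_\mathrm{f}=\min_{\Psi}\langle\Psi|\hat H_\Delta|\Psi\rangle$ (furnished by Theorem~\ref{thm:penalty}), split the minimization into an outer search over densities and an inner search over $\Psi\to n$, pull $\int v_{\rm ext}\,n$ out of the inner minimum, and identify what remains as $F_\Delta[n]$. Your additional remarks on $N$-representability, the Lieb density-matrix extension, boundedness of $\hat Q$, and the fact that $\langle\Psi|\ncore|\Psi\rangle$ is not itself a pure density functional go well beyond the paper's (purely formal) four-line derivation, but they are refinements of the same argument rather than a different one.
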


\begin{proof}[Proof of Theorem 3]

\begin{align}
E_\mathrm{f}
  &= \min_{|\Psi\rangle}\,\langle \Psi | \hat{H}_\Delta | \Psi \rangle \nonumber\\
  &= \min_{n}\Bigl(\, \min_{|\Psi\rangle \to n}\, \langle \Psi | \hat{H}_\Delta | \Psi \rangle \Bigr) \nonumber\\
  &= \min_{n}\Bigl(\, \min_{|\Psi\rangle \to n}\, 
      \langle \Psi | \hat{T} + \hat{V}_\mathrm{ee} + \Delta \,\hat Q | \Psi \rangle 
      + \int v_{\mathrm{ext}}\, n\, dr \Bigr) \nonumber\\
  &= \min_{n}\Bigl( F_\Delta[n] + \int v_{\mathrm{ext}}\, n\, dr \Bigr). 
\end{align}
Here, ``$\ket{\Psi}\to n$'' indicates that $\ket{\Psi}$ is any state which yields the given density $n$. This proves Theorem 3.
\end{proof}
\noindent
Next, with the KS ansatz of the auxiliary system \cite{KohnSham1965}, the eigenstate of the effective Hamiltonian can be given as a KS Slater determinant $\ket{\Phi}=|\{\psi_n\}|$, and the lowest core-excited state satisfies a local minimum condition with respect to each $\bra{\psi_n}$,
\begin{equation}
   \left[\frac{\delta F_\Delta}{\delta \psi_n^*}  + v_\mathrm{ext}\right]\psi_n=\epsilon_n\psi_n,\label{eq:KS_org}
\end{equation}
where $\epsilon_n$ is a KS eigenenergy as a Lagrange multiplier for the orthonormal condition $\braket{\psi_m|\psi_n}=\delta_{m,n}$. 
As a consequence of Eqs. \eqref{eq:comm-assump}–\eqref{eq:KS_org}, we can explicitly state that, under the assumption of Eq. \eqref{eq:comm-assump}, the first core-excited state is rigorously formulated within DFT. In particular, combining the GL-based variational principle with the KS auxiliary-system ansatz leads to a formally exact $\Delta$SCF description, and yields the core-excited KS equation in Eq. \eqref{eq:KS_org}.
It should be noted that the Theorems 1--3 and the associated KS framework are also valid for ($N-1$)-body core-excited states. 
Thus, the fact establishes the foundation for the previous calculations of absolute binding energies of core levels for the X-ray photoemission spectroscopy (XPS) \cite{Ozaki2017PRL}.

We now decompose $F_\Delta[n[\Phi]]$ into each contribution and assume that the local/semi-local approximation (LDA/GGA) \cite{Perdew1996_PRL_PBE} holds as
\begin{align}
F_{\rm \Delta}[{\Phi}]&=T_s+E_{\rm H}+E_\mathrm{P}+E_{\rm \Delta,xc}, \label{eq:F}\\
E_{\rm \Delta,xc}[{\Phi}]&\simeq E_\mathrm{LDA/GGA},
\end{align}
where $E_\mathrm{P}$ is 
\begin{equation}\label{eq:E_p}
E_\mathrm{P}[{\Phi}]=\braket{\Phi|\Delta\ncore|\Phi}=\sum_{n\in \text{occ}}\braket{\psi_n|\Delta\ncore|\psi_n}.
\end{equation}
Applying Eqs. \eqref{eq:F}--\eqref{eq:E_p} to Eq. \eqref{eq:KS_org}, we have the practical form:
\begin{equation}
{\;\left[\hat h_{\rm KS}[{\Phi^\text{in}}]+\Delta\ncore\right]\psi_n^\text{out}=\epsilon_n\psi_n^\text{out}\;},
\label{eq:ks-penalty}
\end{equation}
where $\hat{h}_\mathrm{KS}$ is the conventional KS Hamiltonian, and $\Delta\ncore=\Delta\ket{\phi_\mathbf{c}}\bra{{\phi_\mathbf{c}}}$ is the penalty term. 
Index ``in/out'' are for the self-consistent-field (SCF) iteration, and the SCF condition is achieved at ${\Phi^\text{out}}={\Phi^\text{in}}=:{\Phi_\mathrm{f}}$, while the penalty term enforces the vacant core occupation $\braket{\ncore}=0$. 
The practical implementation will be discussed in Sec. \ref{section:LCAO}.

So far, we present how to calculate the lowest core-excited energy $E_\mathrm{f}$. 
To obtain the absorption spectra following Eq. \eqref{eq:Fermi_golden_rule}, we should also take into account the higher-lying states, which are $\braket{\ncore}=0$ and orthonormal to the other eigenstates. 
One may approximately find different KS state configurations, formally written as 
\begin{equation}
  \ket{\Phi^{(n\to m)}_\mathrm{f}}:= \hat{a}^\dagger_m\hat{a}_n\ket{\Phi_\mathrm{f}},
\end{equation}
where ``$n\to m$'' denotes the single excitation from the $n$-th to the $m$-th KS state, and further multiple excitations can be defined similarly. 
Note that its orthogonality is ensured by the orthogonality of the KS states, and $\braket{\ncore}=0$ holds unless we include $\corestate$ in the configuration as a final KS state. 
We then evaluate the corresponding energy $E_\mathrm{f}^{(n\to m)}$ non-self-consistently as 
\begin{equation}
    E_\mathrm{f}^{(n\to m)}\simeq E_\mathrm{f}[{\Phi^{(n\to m)}_\mathrm{f}};\;n[{\Phi_\mathrm{f}}]],
\end{equation}
where density and Hamiltonian are no longer updated from those of $\ket{\Phi_\mathrm{f}}$. Although this treatment does not take account of additional relaxations, we expect that it can be negligible compared to the core relaxation, which is already included in the Hamiltonian. 
Then, the energy difference can be evaluated by the KS eigenenergy difference as 
\begin{equation} 
E_\mathrm{f}^{(n\to m)}-E_{\mathrm{f}} \;=\; \epsilon_m - \epsilon_{{n}},\label{eq:Ef_rule} 
\end{equation}
since double counting terms depend only on the fixed density.
The next two subsections will present how to obtain associated dipole matrix elements.

\subsection{Reformulation of the matrix element}\label{sec:reformulation}
As appeared in Eq. (\ref{eq:Fermi_golden_rule}), the dipole matrix element determines the absorption intensity and can be written in second-quantized form as 
\begin{equation}\label{eq:org_mat}
\dipolematrix= \sum_{m,n}o_{m,n}\braket{\Finalstate|\hat{a}_m^\dagger \hat{a}_n|\Initialstate},
\end{equation}
where $o_{m,n}$ is a single-particle dipole matrix element with $m$ and $n$, which run for the chosen orthonormal basis set. 
The final state may be expanded as \cite{Liang2018PRB} 
\begin{align}
\ket{\Finalstate}
&=\sum_{\substack{m\in\text{unocc}\\n\in\text{occ}}} C_{mn}\,\hat a^\dagger_{m}\hat a_{n}\ket{\Initialstate}\nonumber
\\&\quad+\sum_{\substack{m,m'\in\text{unocc}\\n,n'\in\text{occ}}} C_{mm'nn'}\, \hat a^\dagger_{m}\hat a^\dagger_{m'}\hat a_{n'}\hat a_{n}\ket{\Initialstate}\nonumber\\
 &\quad+\,\cdots\,,
\end{align}
where $n$ ($m$) label occupied (unoccupied) single-particle states that diagonalize the reference Hamiltonian such as the initial KS Hamiltonian. 
Thus, response-based formalisms such as  BSE and LR-TDDFT, which describe excitations as fluctuations around the ground-state density, often fail for deep-core edges due to the neglect of substantial higher-order orbital relaxation unless sufficiently rich kernels and bases are employed \cite{Liang2017PRL,Liang2018PRB,Roychoudhury2021_ACSAMI_LiK,Meng2024_PRM_TiK_Benchmark}.

On the other hand, the $\Delta$SCF method approximates both initial and final many-body states to the single KS Slater determinants as
\begin{align}\label{eq:SDapprox}    
\ket{\Psi_{\alpha}}\simeq\ket{\Phi_{\alpha}}=|\{\psi_{\alpha}\}| \quad\quad (\alpha\in\{\mathrm {i,f}\}), 
\end{align}
where $\{{\psi_\alpha}\}$ is the $N$ KS states of the initial (i) and final (f) KS Hamiltonian, respectively. 
Strictly speaking, the KS Slater determinant is {not} the exact many-body eigenstate due to the neglect of multiple configurations, which may lead to further energy splitting.
Nevertheless, since $\InitialKSstate$ ($\FinalKSstate$) reproduces the ground (core-excited) state density and provides a variationally consistent estimate of $E_{\mathrm i}$ ($E_{\mathrm f}$, respectively), the computation of dipole matrix elements in the KS basis systematically incorporates core relaxation and thus surpasses the sudden approximation which omits it. 
Then, it reads 
\begin{equation}\label{eq:GSB}
\braket{ \Psi_{\mathrm{f}} | \hat{O} | \Initialstate }
= \sum_{m\in \text{unocc}} o_{m,\mathbf{c}}^{(\text{GSB})}\braket{\Phi_{\mathrm{f}}| \InitialKSstate^{{(\mathbf{c}\to m)}} },
\end{equation}
where ${o}^\text{(GSB)}_{m,\mathbf{c}}$ is expanded by the initial KS states (ground-state basis, GSB \cite{Liang2017PRL,Liang2018PRB}), and $m$ runs for the unoccupied states. 
Note that $\braket{\Phi_{\mathrm{f}}| \InitialKSstate^{(\mathbf{c} \to {m})} }$ takes account of many-body excitation effects and does not behave as a Kronecker delta function due to the loss of orthonormality between the eigenstates of different Hamiltonian, i.e., $\braket{\psi_{\alpha,m}|\psi_{\alpha,\mathit{n}}}= \delta_{m,n}$ but 
$\braket{\finalstate{,\mathit{m}}|\initialstate{,\mathit{n}}}\neq \delta_{m,n}$. 
Furthermore, Eq. \eqref{eq:GSB} contains the infinite summation over unoccupied states, which requires a large basis set, and thus high computational costs for convergence. 
Thus, the recent work \cite{Roychoudhury2023PRB} has reformulated it by means of the final states expansion, given as
\begin{equation}
\braket{ \Psi_{\mathrm{f}} | \hat{O} | \Initialstate } = \sum_{n\in \text{occ}}  o_{n,\mathbf{c}}^{(\text{CHB})}\braket{\Phi_{\mathrm{f}}^{(n\to \mathbf{c})}| \Phi_\mathrm{i} }, \label{eq:CHB}
\end{equation}
where ${o}^{\text{(CHB)}}_{n,\mathbf{c}}$ is expanded by the final KS states (core-hole basis, CHB \cite{Roychoudhury2023PRB}), and $n$ runs for the occupied states.
This treatment successfully reduces the computational time $\sim$50\% with a smaller basis set, but its time complexity still scales as $\mathcal{O}(N^4)$ for large systems. 
The comparison of each treatment is given in Table. \ref{tab:comp_time}.

\begin{table}
\setlength{\tabcolsep}{3.5pt}
\caption{Comparison of three dipole matrix element formulations. $N$ denotes the number of occupied electrons and $N_{\mathrm{unocc}}$ denotes the number of unoccupied states to be included. Note that $\mathcal{O}(N_\text{}^3N_\text{unocc}) \simeq \mathcal{O}(N^4)$, since $N_\text{unocc}$ is proportional to $N$ in practical calculations.}
\centering
\begin{ruledtabular}
\begin{tabular}{@{} c c c c @{}}
\multicolumn{1}{c}{Eq.} &
\multicolumn{1}{c}{Expansion basis} &
\multicolumn{1}{c}{Big $\mathcal{O}$} &
\multicolumn{1}{c}{Ref.} \\
\colrule
\eqref{eq:GSB}           &  Ground-state (GSB) & $\mathcal{O}(N_\text{}^3N_\text{unocc})$   & \cite{Liang2017PRL,Liang2018PRB} \\
\eqref{eq:CHB}           & Core-hole (CHB) & $\mathcal{O}(N_\text{}^4)$   & \cite{Roychoudhury2023PRB} \\
\eqref{eq:newformulation}& Selection-rule-aware (SRB) & $\mathcal{O}(N_\text{}^3)$   & this work \\
\end{tabular}
\end{ruledtabular}
\label{tab:comp_time}
\end{table}

In this work, we present a more symmetry-transparent form.
The key idea is to expand the dipole operator $\hat{O}$ with a selection-rule-aware basis (SRB), which consists of the core state $\corestate$, the ``polarized core'' state $C\ket{\hat{O}\phi_\mathbf{c}}$, which will be referred to as \textit{o}\textbf{c} for the index, and the others $\{\ket{\chi_i}\}$. Here, $C$ is the temporary normalization factor such that $|C|^2\braket{\hat{O}\phi_\mathbf{c}|\hat{O}\phi_\mathbf{c}}=1$.
Note that the existence of SRB is ensured by the selection rule 
$\braket{\phi_\mathbf{c}|\hat{O}|\phi_\mathbf{c} }=0$ and the {Gram}-Schmidt orthogornalization, and the non-uniqueness for the choice of the other $\{\ket{\chi_i}\}$ does not affect the final result as will be clear later. 
Then, its annihilation action to KS Slater determinants, following the traditional sense, would require enormous occupied/unoccupied channel separation \cite{note:remark}, due to the loss of consistency with KS states.
For example, to represent $\hat{a}_{o\mathbf{c}}\ket{\InitialKSstate}$, we should return to the GSB with the unitary transform relation,
\begin{align}\label{eq:decomposition}
\polarizedcorestate&= \hat{\mathbf{1}}\polarizedcorestate\nonumber\\
&=\left[\sum_{n\in{\text{occ}}}\ket{\initialstate{\mathit{n}}}\bra{\initialstate{\mathit{n}}} + \sum_{m\in{\text{unocc}}}\ket{\initialstate{\mathit{m}}}\bra{\initialstate{\mathit{m}}}\right] \ket{\hat{O}\phi_\mathbf{c}},
\end{align}
and use the fact that only the occupied states $\ket{\initialstate{\mathit{n}}}$ can be annihilated with a different coefficient $\braket{\initialstate{\mathit{n}}|\hat{O}\phi_\mathbf{c}}$, and the unoccupied states $\ket{\initialstate{\mathit{m}}}$ read zero.  
Instead, we will handle this issue by carefully adopting the dipole selection rule and the fermionic statistics, so that terms creating double occupancy automatically vanish.
Starting with the canonical anticommutation relation \( \{\hat{a}^{\dagger}_{m},\hat{a}_{n}\}=\delta_{m,n} \), one can rewrite Eq. \eqref{eq:org_mat} as 
\begin{align}\label{eq:SRB}
\braket{ \Psi_{\mathrm{f}} | \hat{O} | \Initialstate }
&= \sum_{m,n} o_{m,n}^{(\text{SRB})}\, \braket{ \Phi_{\mathrm{f}} | \hat{a}_m^\dagger \hat{a}_n | \InitialKSstate } \nonumber\\[3pt]
&= \sum_{m,n} o_{m,n}^{(\text{SRB})}\, \braket{ \Phi_{\mathrm{f}} | \delta_{m,n} - \hat{a}_n \hat{a}_m^\dagger | \InitialKSstate }\nonumber \\[3pt]
&= - \sum_{m,n} o_{m,n}^{(\text{SRB})}\braket{ \Phi_{\mathrm{f}} | \hat{a}_n \hat{a}_m^\dagger | \InitialKSstate }, 
\end{align}
where ${o}^\text{(SRB)}_{m,\mathbf{c}}$ is expanded by the SRB, and $m$ and $n$ run for the whole state. 
For the third line, we used $\braket{\FinalKSstate|\InitialKSstate}=\braket{\FinalKSstate|\ncore|\InitialKSstate}=0$.
We now let creation operators act on both the bra and the ket, formally permitting double occupancy. 
Since the final state contains no core electron, we have
\begin{equation}
\braket{\FinalKSstate,\chi_n|\InitialKSstate,\chi_m}=0\quad\text{where  } \ket{\chi_n}\neq \corestate.\label{eq:SRB2}
\end{equation}
The remaining case with $m=\mathbf{c}$ also vanishes by fermionic nilpotency: 
$\hat a^\dagger_{\mathbf c}\ket{\Phi_\mathrm{i}}
=\hat a^\dagger_{\mathbf c}\hat a^\dagger_{\mathbf c}\hat a_{\mathbf c}\ket{\Phi_\mathrm{i}}=0$.
On the other hand, by construction of the SRB, we have
\begin{equation}
o_{m,\mathbf{c}}^{(\text{SRB})}=\braket{\chi_m|\hat{O}\phi_\mathbf{c}}=0 \quad \text{where  } \ket{\chi_m}\neq C\polarizedcorestate,\label{eq:SRB3}
\end{equation}
so that all the terms in Eq.~\eqref{eq:SRB} vanish except the one with $\ket{\chi_n}=\corestate$ and $\ket{\chi_m}=C\polarizedcorestate$.
Hence,
\begin{align}
\braket{ \Psi_{\mathrm{f}} | \hat{O} | \Initialstate }
&=-o_{o\mathbf{c},\mathbf{c}}^{(\text{SRB})}\braket{\FinalKSstate,\phi_\mathbf{c}|\InitialKSstate,C\hat{O}\phi_\mathbf{c}}\nonumber\\
&=-\Newmatrixelement, \label{eq:newformulation}
\end{align}
which is the desired result. For the last equation, the linearity of the determinant and 
\begin{align}
Co_{o\mathbf{c},\mathbf{c}}^\mathrm{(SRB)} &= C \braket{C\hat{O}\phi_\mathbf{c}|\hat{O}|\phi_\mathbf{c}}\nonumber\\&
=|C|^2\braket{\hat{O}\phi_\mathbf{c}|\hat{O}\phi_\mathbf{c}}=1
\end{align}
were used. 
It should be noted that Eq. \eqref{eq:newformulation} immediately recovers Eq. \eqref{eq:CHB} through the cofactor expansion along $\ket{\hat{O}\phi_\mathbf{c}}$ as a column vector (see Sec. \ref{sec:multiple_excited_state}), and is reduced to $\mathcal{O}(N^3)$ without any assumption. 


\begin{figure}
    \centering
  \includegraphics[width=\columnwidth]{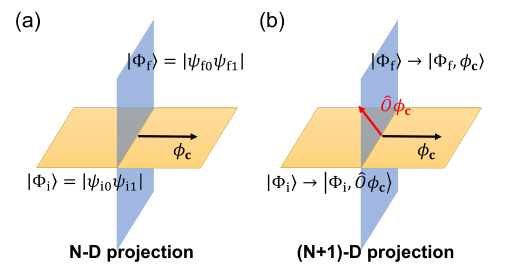}
    \caption{Geometrical interpretation of the matrix element between Slater determinants ($N=2$). The horizontal plane (yellow) stands for the initial state, spanned by $\initialstate{0} $ and $\initialstate{1}$, while the perpendicular plane (blue) for the final state spanned by $\finalstate{0}$ and $\finalstate{1}$. (a) shows $\braket{\FinalKSstate|\InitialKSstate}=\braket{\FinalKSstate|\ncore|\InitialKSstate}=0$, and (b) shows $\Newmatrixelement\neq0$ in general.}
    \label{fig:geo}
\end{figure}

According to Eq.~\eqref{eq:newformulation}, two geometric parameters $\theta$ and $\varphi$ govern the matrix element, which arise from the final and initial state, respectively.
Let us recall that the $N$-body Slater determinant is identified with the exterior product $\wedge^N\psi_{n}$ over the Hilbert space, i.e., the $N$-volume element spanned by $\{{\psi_n}\}$, preserving the inner product as a projection.
For the simplest case $N = 1$, $- \braket{\finalstate{},\phi_\mathbf{c}|\initialstate{},\hat{O}\phi_\mathbf{c}}$ can be interpreted as the projection of the $\ket{\initialstate{}, \hat{O}\phi_\mathbf{c}}$ subspace onto the $\ket{\finalstate{},\phi_\mathbf{c}}$ subspace and vice versa.
Then, since $\ket{\initialstate{}}=\corestate$ (for  $\braket{\ncore}=1$) and the two subspaces share the $\corestate$ vector, the projection is characterized by the angle $\theta$ between the remaining $\polarizedcorestate$ and $\ket{\finalstate{}}$, 
\begin{equation}
   - \braket{\finalstate{},\phi_\mathbf{c}|\phi_\mathbf{c},\hat{O}\phi_\mathbf{c}}=\braket{\finalstate{}|\hat{O}\phi_\mathbf{c}}\propto \cos{\theta}.
\end{equation}
Thus, the core excitation to $\ket{\finalstate{}}$ is maximized at $\theta=0$ and forbidden at $\theta=\pi/2$. This clarifies the role of $\theta$, which is defined similarly for larger $N$ (see below). 

Next, let us consider the case $N=2$ as shown in Fig. 1, in which each state corresponds to the subspace spanned by its own occupied states, up to a unitary rotation within the occupied subspace. 
Figure \ref{fig:geo}(a) illustrates the geometry relative to the core state $\corestate$: the $\ket{\InitialKSstate}$ subspace is aligned (parallel) with $\corestate$, while the $\ket{\FinalKSstate}$ subspace is orthogonal to it, so that $\braket{\Phi_{\mathrm f}|\Phi_{\mathrm i}}=\braket{\Phi_{\mathrm f}|\ncore|\Phi_{\mathrm i}}=0$.
By contrast, after adding $\polarizedcorestate$ and $\corestate$ to each subspace, respectively, the three-dimensional projection $-\Newmatrixelement$
in Fig.~\ref{fig:geo}(b) is generally nonzero and can be parameterized by two angles on $\polarizedcorestate$: 
$\theta$ with respect to $\ket{\FinalKSstate}$ and $\varphi$ with respect to $\ket{\InitialKSstate}$.
Physically, $\varphi$ encodes the Pauli exclusion principle in the polarized core state: 
Decompose $\polarizedcorestate=\ket{[\hat{O}\phi_\mathbf{c}]_\parallel}+\ket{[\hat{O}\phi_\mathbf{c}]_\perp}$ as Eq. \eqref{eq:decomposition}, so that $\braket{[\hat{O}\phi_\mathbf{c}]_\parallel|\hat{O}\phi_\mathbf{c}}\propto\cos{\varphi}$.
Since $\hat{a}^\dagger_{o\mathbf{c}}\ket{\InitialKSstate}$ automatically reads 
\begin{align}
\ket{\InitialKSstate,\hat{O}\phi_\mathbf{c}}&=\ket{\InitialKSstate,[\hat{O}\phi_\mathbf{c}]_\parallel}+\ket{\InitialKSstate,[\hat{O}\phi_\mathbf{c}]_\perp}\nonumber\\
&=\ket{\InitialKSstate,[\hat{O}\phi_\mathbf{c}]_\perp},
\end{align}
core excitation from $\ket{\InitialKSstate}$ is forbidden at $\varphi=0$ and maximized at $\varphi=\pi/2$.
This clarifies the role of $\varphi$, which is opaque in the GSB/CHB treatments \cite{Liang2017PRL,Liang2018PRB, Roychoudhury2023PRB}, and shows that we do not need to explicitly construct $\ket{[\hat{O}\phi_\mathbf{c}]_\perp}$ in the creation-based treatment.

For the larger $N$ case, the residual overlap between the remaining $\ket\InitialKSstate$ and $\ket\FinalKSstate$ excluding $\polarizedcorestate$  and $\corestate$ influences the matrix element and suppresses additional deep-state excitations by holding $\ket\FinalKSstate$ similar to $\ket{\InitialKSstate}$. 
Although it appears in GSB/CHB as many intermediate determinants, it can be reduced to $-\Newmatrixelement$ in the one-rank higher dimension, enabling efficient calculations.

\subsection{Multiple excited state}\label{sec:multiple_excited_state}
Our next goal is to calculate $|\Newmatrixelement|^2$ of many core-excited states, which are represented as the final KS Slater determinants with different configurations.
To make further simplification, we can continue to adopt the previous idea, which has been applied to the GSB/CSB treatments \cite{Liang2018PRB, Roychoudhury2023PRB}.
Let us denote the final state $\bra{\psi_{\mathrm{f},m}}$ and the core state $\bra{\phi_\mathbf{c}}$ as a $(N+1)$-dimensional row vector $\mathbf{a}_m\in \mathbb{C}^{N+1}$, which consists of the overlap with $\ket{\InitialKSstate}=|\{\psi_{\mathrm{i}}\}|$ and $\polarizedcorestate$,
\begin{align}
{\bf a}_m:=
\begin{bmatrix}
\braket{\finalstate{,\mathit{m}}|\initialstate{,1}}\quad... \quad\braket{\finalstate{,\mathit{m}}|\hat{O}\phi_\mathbf{c}}
\end{bmatrix}
,\\
\quad\quad\quad\quad(1\leq m \leq M)\nonumber
\end{align}
where $M$ is the total number of KS states (occupied $+$ unoccupied) and $\mathbf{a}_0$ keeps for $\bra{\phi_\mathbf{c}}$. 
Then, the dipole matrix element $\Newmatrixelement$ coincides with the determinant of the overlap matrix $A$ (e.g., for $\bra\FinalKSstate=|\{\mathbf{a}_0,...,\mathbf{a}_N\}|$),
\begin{equation}
 A=
\begin{bmatrix}\label{eq:definition_A}
\mathbf{a}_0 \\
\vdots \\
\mathbf{a}_{N}
\end{bmatrix},
 \quad \det A=\Newmatrixelement.
\end{equation}
Let us assume that $\{\mathbf{a}_m\}$ $({0\leq m\leq M})$ contains at least $N+1$ linear independent vectors. After reordering, place them as the first 
$N+1$ rows so that the associated matrix $A$ satisfies $\det A\neq0$.
In practice, it can be found within the lowest or single excitation configuration.
Then, $\{\mathbf{a}_n\}$ ${(0\leq n\leq N}$) spans the vector space $\mathbb{C}^{N+1}$ so that any ${\bf a}_m\in \mathbb{C}^{N+1}$ can be linearly expanded as
\begin{equation}
    \mathbf{a}_m=\sum_{0\leq n\leq N}K_{mn}\mathbf{a}_n \quad(0\leq m\leq M),
\end{equation}
or equivalently, 
\begin{align}\label{eq:matrix_K}
\begin{bmatrix}
\mathbf{a}_{0} \\
\vdots \\
\mathbf{a}_{M}
\end{bmatrix}
=
{{K}}
\begin{bmatrix}
\mathbf{a}_0 \\
\vdots \\
\mathbf{a}_{N}
\end{bmatrix}
\;\;\Longleftrightarrow\;\;
{{K}} = B A^{-1},
\end{align}
where $B$ stands for the $(M+1)\times (N+1)$ matrix in LHS, and $K_{mn}$ is the linear coefficient of $\mathbf{a}_m$ in $\mathbf{a}_n$.
Then, the matrix element of $\ket{\FinalKSstate^{(n\to m)}}$, which includes the single excitation ${\bf a}_{n}\to{\bf a}_{m}$, can be written as (e.g., for $n=N$) 
\begin{align}
\left|
\begin{array}{c}
\mathbf{a}_0 \\
\vdots \\
\mathbf{a}_m = \sum K_{mn} \mathbf{a}_n
\end{array}
\right|
= {{K}_{mN}}
\left|
\begin{array}{c}
\mathbf{a}_0 \\
\vdots \\
\mathbf{a}_N
\end{array}
\right|
= {{K}_{mN}}\det A.\label{eq:single_excitation}
\end{align}
For the first equation, the bilinear and alternating property of the determinant were used. 
Consequently, once $K$ is determined through Eq. \eqref{eq:matrix_K}, all matrix elements with a single excitation can be obtained through its component, allowing the simultaneous computations of the determinants for all single-excitation final states.

Although the benchmark calculations in this study are    limited to the above single excitation, we present two approaches for double excitations. The first is, as suggested in Ref. \cite{Liang2018PRB}, to change two rows in Eq. \eqref{eq:single_excitation}, reading
\begin{equation}
\left|
\begin{array}{c}
\mathbf{a}_0 \\
\vdots \\
\mathbf{a}_{m'}=\sum K_{m',n} \mathbf{a}_n\\
\mathbf{a}_m = \sum K_{mn} \mathbf{a}_n
\end{array}
\right|
=
\left|
\begin{array}{c}
K_{m',N-1}\; K_{m,N-1} \\
K_{m',N}\; K_{mN}
\end{array}
\right|
\det A,\label{eq:double_ex}
\end{equation}
which requires a determinant calculation for each configuration. 
Alternatively, we suggest the recursive approach by updating $(A,K)\to(A', K'=B[A']^{-1})$ with new basis $A'$ after single excitation, i.e., 
\begin{align}
\left|
\begin{array}{c}
\mathbf{a}_0 \\
\vdots \\
\mathbf{a}_{m'}\\
\mathbf{a}_m
\end{array}
\right|
=K'_{m',N-1}\det A',
\label{eq:double_ex2}
\end{align}
where $\det A'=K_{mN}\det A $ is previously given in Eq. \eqref{eq:single_excitation}. 
Note that Eq. \eqref{eq:double_ex2} holds if $K_{mN}\neq0$.
By comparing Eq. \eqref{eq:double_ex} with Eq. \eqref{eq:double_ex2}, we find the recursive relation between $K'$ and $K$,
\begin{align}
   K'_{m',N-1}&=
\left|
\begin{array}{c}
K_{m',N-1}\; K_{m,N-1} \\
K_{m',N}\; K_{mN}
\end{array}
\right|[K_{mN}]^{-1} \nonumber\\
&= K_{m',N-1} - \frac{K_{m,N-1}K_{m',N}}{K_{mN}},
\end{align}
while $K'$ will be directly calculated through $B[A']^{-1}$.
Since Eq. \eqref{eq:double_ex2} enables the collective evaluation analogous to the single excitation case, it should be carefully implemented and tested, while monitoring duplicate counting and convergence behavior.

\section{Implementation}
\subsection{General}
\begin{table}[tbp]
\caption{Basis functions used for each X-ray absorption calculation.}
\centering
\begin{ruledtabular}
\begin{tabular}{@{}ll@{}}
\multicolumn{1}{c}{\textbf{Element}} &
\multicolumn{1}{c}{\textbf{Basis functions}} \\
\midrule
\multicolumn{2}{l}{\textbf{C K-edge}} \\
\quad C  & \texttt{C7.0.1s-s4p3d2} \\
\quad C\_CH  & \texttt{C7.0.1s\_CH-s4p3d2} \\
\quad H  & \texttt{H7.0-s3p2} \\
\addlinespace
\multicolumn{2}{l}{\textbf{B K-edge}} \\
\quad B  & \texttt{B7.0.1s-s3p2d1} \\
\quad B\_CH  & \texttt{B7.0.1s\_CH-s3p2d1} \\
\quad N  & \texttt{N6.0-s2p2d1} \\
\quad Mg & \texttt{Mg7.0-s3p2d2} \\
\addlinespace
\multicolumn{2}{l}{\textbf{O K-edge}} \\
\quad O  & \texttt{O7.0.1s-s4p3d2} \\
\quad O\_CH  & \texttt{O7.0.1s\_CH-s4p3d2}\qquad\qquad\qquad\qquad \\
\quad Li & \texttt{Li8.0-s3p2} \\
\quad C  & \texttt{C6.0-s2p2d1} \\
\addlinespace
\multicolumn{2}{l}{\textbf{Li K-edge}} \\
\quad Li & \texttt{Li8.0.1s-s3p2d1} \\
\quad Li\_CH & \texttt{Li8.0.1s\_CH-s3p2d1} \\
\quad O  & \texttt{O6.0-s2p2d1} \\
\quad C  & \texttt{C6.0-s2p2d1} \\
\end{tabular}
\end{ruledtabular}\label{tab:basis}
\end{table}

All calculations were performed with \textsc{OpenMX} \cite{OpenMX_Website, Ozaki2005_PRB_ProjectorExpansion, Duy2014_CPC_3D_DomainDecomp, Lejaeghere2016_Science_Reproducibility}, which employs norm-conserving pseudopotentials (PPs) \cite{Theurich2001_PRB_SOC_PSP,Morrison1993_PRB_Vanderbilt_Hermitian} and a linear combination of pseudo-atomic orbitals (LCPAO) \cite{Ozaki2003_PRB_VariationalNAO}.
The basis functions used are summarized in Table~\ref{tab:basis}. 
In labels such as \texttt{H7.0-s3p2d1}, the leading letter denotes the element; the number (e.g., 7.0) is the confinement (cutoff) radius in Bohr used when generating the basis with \textsc{ADPACK} \cite{OpenMX_Website}; and the trailing string \texttt{s3p2d1} specifies the numbers of optimized radial functions for the \(s\), \(p\), and \(d\) channels, respectively. 
A suffix \texttt{\_1s} in the basis (e.g., \texttt{C7.0\_1s-s4p3d2}) indicates that the 1$s$ core state is retained, while \texttt{\_CH} marks a core-hole–adapted variant used for final state calculations. 
Their precision was cross-checked by benchmarking X-ray photoemission spectroscopy (XPS) binding energies \cite{Ozaki2017PRL, Lee2017_Silicene_CoreStates_PRB,
      Yamazaki2018_JPCC_SinglePtGraphene,
      Lee2018_PRB_BoropheneBonding,
      Kurumada2024_ChemEurJ_AlIAnionDimer,
      ChemPhysChem2023_H2Vacancies_MoS2_APXPS,
      Sato2025_PRM_3x3Si_on_Al111,
      PCCP2024_PdCu111_HCOOH_H2_Process}. 
Exchange–correlation functional was treated within the generalized gradinent approximation (GGA) by Perdew, 
Burke, and Ernzerhof \cite{Perdew1996_PRL_PBE}, and an electronic temperature of 300 K was used for Fermi–Dirac occupations. To capture core relaxation that breaks translational symmetry, we adopted a supercell approach with $\Gamma$-point sampling. Core excitations were computed by an explicit DFT-$\Delta$SCF framework augmented with a penalty projector (Eq. \eqref{eq:ks-penalty}). The threshold penalty constant $\Delta$ which reverses the energy ordering between the core-occupied and core-excited subspaces (Eq.~\eqref{eq:theorem2_proof}) is evaluated by the core-level KS eigenvalue; in our benchmarks,
$\Delta = 100$ Ry ($\simeq$ 1361 eV) was found sufficient for all
systems tested here.
Real-space grid cutoff was chosen to converge total energies to within $\sim$ 0.1–0.2~eV. 
\subsection{KS equation with LCPAO} \label{section:LCAO}

In the LCPAO representation, the KS state at a wave vector \( \mathbf{k} \) is expanded as
\begin{equation}\label{eq:LCAO}
  \ket{\psi^{\mathbf{k}}_{\alpha,m}}=\sum_\mu C_{\alpha,\mu m }(\mathbf{k})\ket{\phi^{\mathbf{k}}_\mu},
\end{equation}
where $\alpha\in\mathrm{\{i,f\}}$ labels initial/final states, \( m \) is the band index, and $\{\ket{\phi^{\mathbf{k}}_\mu}\}$ are Bloch-summed PAOs.
The PAO overlap and dipole matrices are
\begin{align}
S_{\mu\nu}({\mathbf{k}})&=\braket{\phi^{\mathbf{k}}_\mu|\phi^{\mathbf{k}}_\nu},\label{eq:overlap}
\\
P^{\boldsymbol\epsilon}_{\mu\nu}({\mathbf{k}})&=\braket{\phi^{\mathbf{k}}_\mu|\hat {O}|\phi^{\mathbf{k}}_\nu}\label{eq:dipole},
\end{align}
where $\boldsymbol\epsilon$ explicitly denotes the polarization direction of $\hat{O}$.
Each entity is calculated by momentum-space and real-space integration, respectively. 
Then, the projection of the KS equation (Eq. \eqref{eq:ks-penalty}) onto each PAO basis $\ket{\phi^{\mathbf{k}}_\mu}$ yields the generalized eigenproblem ($\mathbf{k}$ is hidden for simplicity),
\begin{equation}
\begin{aligned}
  &\sum_{\nu}\!\Bigl[ H_{\mu\nu} + \Delta S_{\mu \mathbf{c}}S_{\mathbf{c}\nu} \Bigr] C_{\nu m}
  =\sum_{\nu} S_{\mu\nu} C_{\nu m} \epsilon_{m} ,
  \end{aligned}
\end{equation}
where $H_{\mu\nu}({\bf k})=\braket{\phi^{\mathbf{k}}_\mu|\hat h_\text{KS}|\phi^{\mathbf{k}}_\nu}$. 
Since $\corestate$ is fixed, the penalty matrix $\Delta\, S_{\mu \mathbf{c}}(\mathbf{k})\,S_{\mathbf{c}\nu}(\mathbf{k})$ remains unchanged throughout the SCF iterations.

Although the benchmark calculations are limited to the $s$ core state (K-edge) with collinear Hamiltonian in this study, $p$/$d$-core state excitations (L/M-edge)  with non-collinear Hamiltonian can be handled by promoting all quantities to spinor form with the corresponding core state as
\begin{align}
    \corestate&=\ket{R_\mathbf{c}{\Phi_J^M}},
\end{align}
where 
\begin{equation}\label{eq:spinor1}
    \ket{\Phi_J^M}= \sqrt{\frac{l+m+1}{2 l+1}}\ket{Y_l^m \uparrow}+\sqrt{\frac{l-m}{2 l+1}}\ket{Y_l^{m+1} \downarrow}, 
\end{equation}
for $J=l+\frac{1}{2}$ and $M=m+\frac{1}{2}$, and 
\begin{equation}\label{eq:spinor2}
    \ket{\Phi_J^M}= \sqrt\frac{l-m+1}{2 l+1}\ket{Y_l^{m-1} \uparrow}-\sqrt\frac{l+m}{2 l+1}\ket{Y_l^{m} \downarrow}, 
\end{equation}
for $J=l-\frac{1}{2}$ and $M=m-\frac{1}{2}$. Here, $\ket{R_\mathbf{c}}$ and $\ket{\Phi_J^M}$ are the radical and spherical component, respectively, $\ket{\uparrow}$ and $\ket{\downarrow}$ are spin states, and each coefficient $\sqrt{\cdots}$ in Eqs. \eqref{eq:spinor1} and \eqref{eq:spinor2} is the Clebsch-Gordan coefficient.
The relevant implementations and calculations based on the core-explicit $\Delta$SCF framework are also found in Refs. \cite{Ozaki2017PRL, OpenMX_Website}.

During the collinear calculations, we observed that the lowest core-excited state is often an optically dark state, which yields no absorption because of the discrepancy of the spin moment. 
This is because the dipole operator $\hat{O}$ preserves the spin moment $\hat{S}_\mathrm{z}$, i.e., $[\hat{O},\hat{S}_\mathrm{z}]=0$, so
\begin{gather}
    \braket{\Finalstate|[\hat{O},\hat{S}_\mathrm{z}]|\Initialstate}=(m_\mathrm{i}-m_\mathrm{f})\dipolematrix=0 \label{eq:collinear} \nonumber\\
    \Longrightarrow \nonumber \\
    \text{if}\; m_\mathrm{f}\neq m_\mathrm{i}, \quad\dipolematrix=0, \label{eq:collinear2}
\end{gather}
where 
\begin{equation}
\hat{S}_\mathrm{z}\ket{\Psi_{\alpha}}=m_{\alpha}\ket{\Psi_{\alpha}}\quad (\alpha\in\{\mathrm{i, f}\}).\label{eq:collinear3}
\end{equation}
It should be noted that Eq. \eqref{eq:collinear3} only holds with the collinear Hamiltonian which satisfies $[\hat{H},\hat{S}_\mathrm{z}]=0$, so that the eigenstate of the Hamiltonian can also be the eigenstate of the spin moment. 
To obtain the absorption-yield state, we further required the conservation of the spin moment during self-consistent iteration by constraining the spin populations. 
Note that this constraint is also ensured by the GL theorem relevant to the spin moment $\hat{S}$ in addition to the constraint for the core-hole creation.

\subsection{Dipole matrix element\label{implementation2}}
The {overlap} matrix $A$ and $B$ (Eqs. \eqref{eq:definition_A} and \eqref{eq:matrix_K}) can be decomposed into four blocks:
\[
  A \text{ or }B\;=\;
  \begin{bmatrix}
    \text{(1,1)} & \text{(1,2)}\\[2pt]
    \text{(2,1)} & \text{(2,2)}
  \end{bmatrix},
\]
where these blocks are defined by \\ \noindent
{(1,1) initial vs. final}
\begin{align}\label{eq:matrix_element1}
\braket{\finalstate{,\mathit{m}}^\mathbf{k}|\initialstate{,\mathit{n}}^\mathbf{k}}
&= \sum_{\mu,\nu} C_{\mathrm{f},m\mu}^{*}\, S_{\mu \nu}\, C_{\mathrm{i},\nu n}\nonumber\\
&= [C^{*}_\mathrm{f}S C_\mathrm{i}]_{mn},
\end{align}

\noindent
{(1,2) polarized core vs. final}
\begin{align}\label{eq:matrix_element2}
\braket{\finalstate{,\mathit{m}}^\mathbf{k}|[\hat{O}\phi_\mathbf{c}]^\mathbf{k}} &= \sum_\mu C^{}_{\mathrm{f},m\mu}P^{\boldsymbol{\epsilon}}_{\mu\mathbf{c}}\nonumber\\ &= [C^*_\mathrm{f}P^{\boldsymbol{\epsilon}}]_{m\mathbf{c}},
\end{align}

\noindent
{(2,1) initial vs. core}
\begin{align}\label{eq:matrix_element3}
  \braket{\phi_{\mathbf c}^\mathbf{k}|\psi_{\mathrm{i},n}^\mathbf{k}}
  &= \sum_\mu S_{\mathbf{c}\mu}C_{\mathrm{i},\mu n}\nonumber\\ &= [SC_\mathrm{i}]_{\mathbf{c}{n}},
\end{align}

\noindent
{(2,2) polarized core vs. core}
\begin{equation}\label{eq:matrix_element4}
  \braket{\phi_{\mathbf c}^\mathbf{k}|[\hat {O} \phi_{\mathbf c}]^\mathbf{k}}
  = 0.
\end{equation}
As shown in Eqs. \eqref{eq:matrix_element1}--\eqref{eq:matrix_element4}, each block can be obtained by multiplication of the fundamental matrices that appeared in Eqs. \eqref{eq:LCAO}--\eqref{eq:dipole}, which ensures efficient parallel computations. 
It is also possible to adopt different basis sets in initial (i) and final (f) calculations by promoting $S_{\mu\nu}$ to $S_{\mathrm{f}\mu, \mathrm{i}\nu}$, which is overlap between two basis sets.

\begin{table}[t]
\caption{Expectation value of $\ncore$ of initial (ground) and final (lowest core-excited) state and its inner product of test materials.}
\label{tab:example}
\begin{ruledtabular}
\begin{tabular}{lccc}
Core state   & $\braket{\InitialKSstate|\hat n_\mathbf{c}|\InitialKSstate}$        & $\braket{\FinalKSstate|\hat n_\mathbf{c}|\FinalKSstate}$         & $|\braket{\FinalKSstate|\InitialKSstate}|^2$    \\
\hline
\multicolumn{4}{l}{\textbf{C 1s}}\\
\quad C$_2$H$_2$   & 0.9986  & 0.0001  & $<10^{-4}$ \\
\quad C$_2$H$_6$   & 0.9986  & 0.0005  & $0.0001$ \\
\multicolumn{4}{l}{\textbf{B 1s}}\\
\quad BN           & 0.9941  & 0.0004 & $<10^{-4}$ \\
\quad MgB$_2$      & 0.9932  & 0.0005  & $<10^{-4}$ \\
\multicolumn{4}{l}{\textbf{O 1s}}\\
\quad Li$_2$O$_2$  & 0.9978  & 0.0002  & $<10^{-4}$ \\
\quad Li$_2$CO$_3$ & 0.9986  & 0.0001  & $<10^{-4}$ \\
\quad Li$_2$O      & 0.9986  & 0.0001  & $<10^{-4}$ \\
\multicolumn{4}{l}{\textbf{Li 1s}}\\
\quad Li$_2$O$_2$  & 0.9992  &  0.0002  & 0.0002 \\
\quad Li$_2$CO$_3$ & 0.9922  &  0.0001   & 0.0002 \\
\quad Li$_2$O      & 0.9918 & 0.0002    & 0.0001 \\
\end{tabular}
\end{ruledtabular}\label{tab:inner_product}
\end{table}

\section{Results}
The internal validation tests in Sec.~\ref{sec:method_verification}
show that the constrained-search construction produces the desired
core-excited states with the intended core occupation and that the SRB
formulation exhibits the effective $\mathcal{O}(N^3)$ scaling, in contrast to
the $\mathcal{O}(N^4)$ cost of the CHB treatment. The case studies in
Sec.~\ref{sec:case_study} then probe sensitivity to local chemistry and
orbital character in molecular systems of XAS, as well as band-structure and symmetry
effects in solids. In all systems considered, we compare the calculated
spectra with reference data and analyze polarization-resolved trends.

\subsection{Method verification}\label{sec:method_verification}

Table~\ref{tab:inner_product} provides a quantitative validation of the
proposed construction of the core-excited state. The ground-state expectation
values $\braket{\InitialKSstate|\hat n_\mathbf{c}|\InitialKSstate}=\sum_{m\in \mathrm{occ}}|\braket{\phi_\mathbf{c}|\psi_{\mathrm{i},m}}|^2$ remain very close
to unity ($0.99$--$1.00$), indicating that the targeted $1s$ core orbital is
essentially fully occupied in the initial state, whereas the lowest
core-excited states exhibit expectation values on the order of $\sim 10^{-4}$,
demonstrating that the core hole is well localized on the designated atom.
The small but finite residual occupations can be attributed to weak
hybridization with neighboring atoms, but their magnitudes are uniformly
negligible. The squares of their inner product $|\braket{\FinalKSstate|\InitialKSstate}|^2$ are
all on the order of $\sim 10^{-4}$, showing that the excited state is
effectively orthogonal to the ground state, even though the individual KS
states of the two calculations are not mutually orthogonal. The robustness
and uniformity of these indicators across all systems considered demonstrate
that the explicit-core $\Delta$SCF framework reliably enforces the targeted
core occupation and yields well-defined core-excited states largely
independent of the chemical environment.

Next, to assess the computational efficiency of the SRB formulation relative to the
CHB treatment, we benchmarked both under identical conditions on a single CPU
core using LAPACK’s LU factorization routine \texttt{zgetrf} (complex double;
threading disabled). As shown in Fig.~\ref{fig:Comp_time}, the SRB scheme
follows the expected $\mathcal{O}(N^3)$ scaling of LU factorization (green
curves), as verified by wall-clock measurements for randomly generated
$\mathbb{C}^{N\times N}$ matrices; the material data points for c-\ce{BN}
($N=320$), \ce{Li2O2}, \ce{Li2O}, and \ce{MgB2} (black dots) fall on the same
trend. In contrast, the CHB formulation expresses the matrix element as a sum
over $N$ occupied states. If each term requires a
comparable $\mathcal{O}(N^3)$ LU factorization, the total cost scales as
$\mathcal{O}(N^4)$. Accordingly, the purple curves in
Figure~\ref{fig:Comp_time} shows CHB estimates obtained by multiplying single-term cost with $N$, and the gray dots mark the corresponding
material points. As seen in the left panel (sub-second regime) and the right
panel (extended to minutes), the CHB evaluation becomes computationally demanding for $N \gtrsim 10^4$, exceeding one hour per excited state,
whereas SRB remains computationally tractable over the same range, completing
in approximately one second.
These results substantiate the favorable scaling and practical
efficiency of the SRB formulation.

\begin{figure}[t]
  \centering

  \includegraphics[width=\columnwidth]{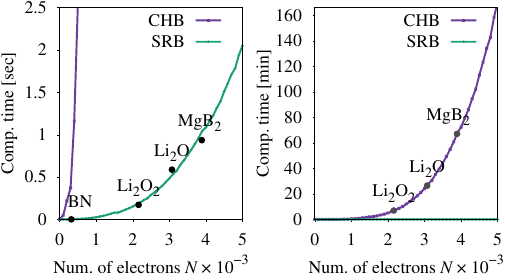}
\caption{Computational time to evaluate a single dipole matrix element using CHB (Eq.~\eqref{eq:CHB}) and SRB (Eq.~\eqref{eq:newformulation}) formulations (Left: seconds (zoomed). Right: minutes (extended)). Measurements were performed on a single CPU core using the LAPACK LU factorization routine (\texttt{zgetrf}). SRB (green) values are measured wall-clock times from runs on a priori randomly generated \(\mathbb{C}^{N\times N}\) matrices, whereas CHB (purple) values are estimated by multiplying single-term cost with $N$ to account for the summation. Black dots indicate exact values of representative materials with SRB (averaged over 10 runs), whereas gray dots indicate estimation with CHB for comparison.}\label{fig:Comp_time}
\end{figure}

\subsection{Case study}\label{sec:case_study}
Figures \ref{fig:C_K-edge}(c,d), \ref{fig:B_K-edge}(g,h), and \ref{fig:O_K-edge}(d--k) compare our calculated spectra with experimental references. In the calculations, red/green/blue curves denote linear $\boldsymbol{\hat{{\epsilon}}}= \hat{\mathrm{x}}/\mathrm{y}/\hat{\mathrm{z}}$ polarizations; the black solid curve is the orientational average, and the black dashed    curve shows the experimental data.  To emulate lifetime and instrumental effects, we apply a constant Gaussian broadening of 0.5 eV; stick spectra are shown to indicate discrete transition energies and intensities. Absolute energy alignment is taken as the difference between the lowest core-excited and ground-state total energies, without any empirical shift.



\begin{figure}[t]
  \centering

  \includegraphics[width=\columnwidth]{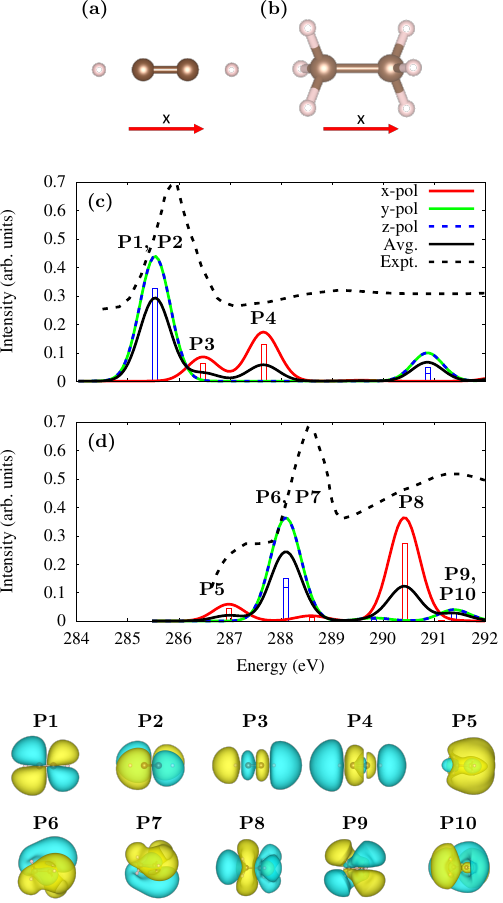}

  \caption{(a) Linear acetylene (\ce{C2H2}) and (b) ethane (\ce{C2H6}) with the incident polarization indicated (\ce{C}: brown, \ce{H}: bright brown). (c) and (d) show the corresponding carbon K-edge XAS; polarization-resolved and averaged spectra are plotted with peak labels P1–-P10. Experimental reference spectra (dashed line) were digitized from published figures (Ref.~\cite{BesleyNoble2007JPCC}) and replotted for comparison. The lower panels display isosurfaces of the final Kohn--Sham states corresponding to peaks P1–P10, with the core hole located on the right C atom; P1--P2, P6--P7, and P9--P10 are degenerate. Yellow and cyan denote opposite phases of the eigenstate.}
  \label{fig:C_K-edge}
\end{figure}

\begin{figure}[t]
  \centering

  \includegraphics[width=\columnwidth]{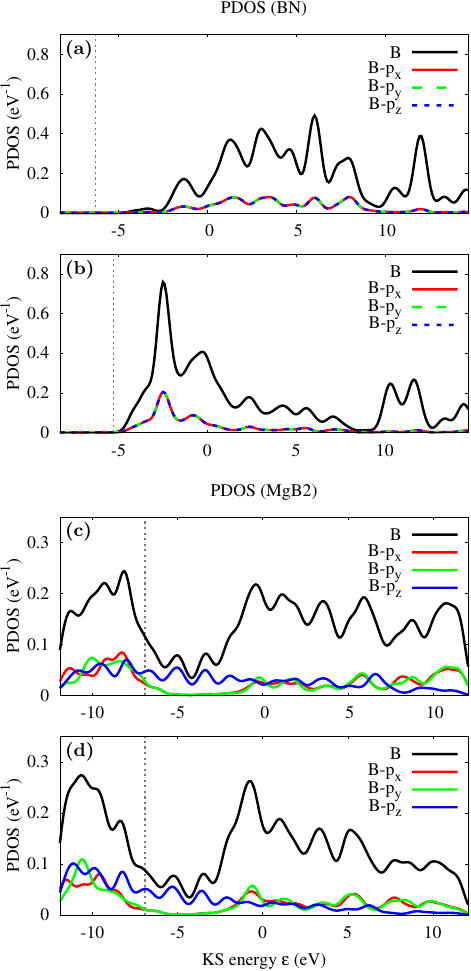}

\caption{%
Projected density of states (PDOS) of the B atom which retains the 1$s$ core hole. Panels (a,b) show \ce{BN} and (c,d) show \ce{MgB2}; (a,c) are ground state (initial), while (b,d) include a B–1$s$ core hole (final). Each dashed line denotes the Fermi level.}
  \label{fig:PDOS}
\end{figure}

\begin{figure*}[t]
  \centering

  \includegraphics[width=\textwidth]{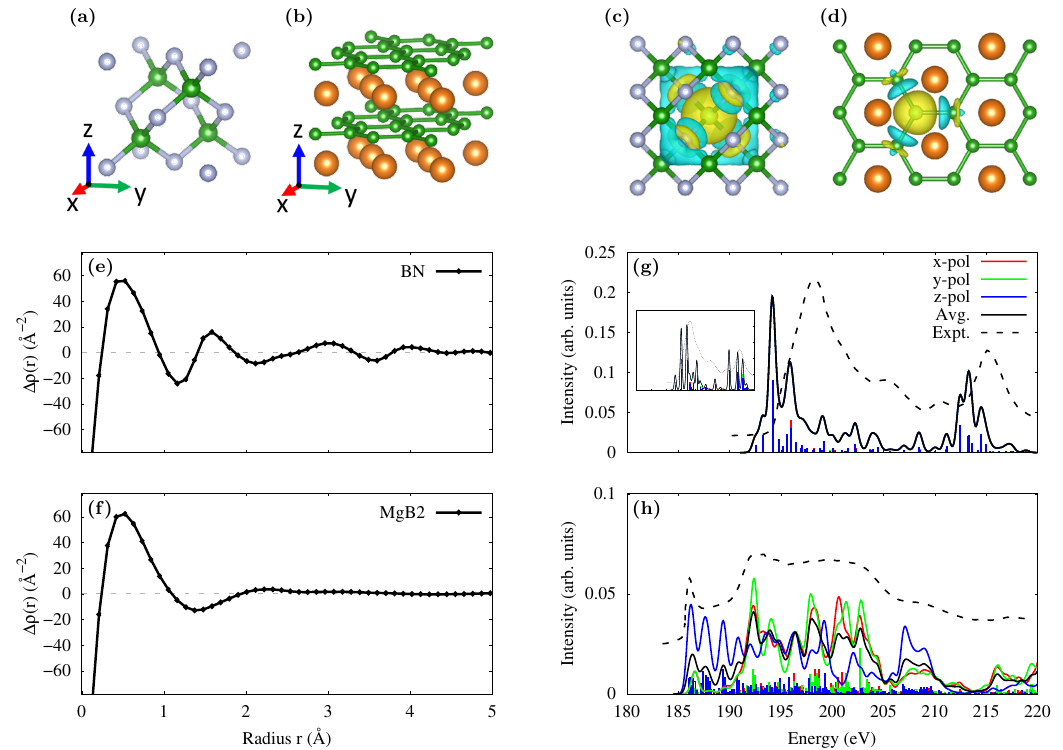}

  \caption{
  (a,b) Crystal structures of (a) \ce{c-BN} and (b) \ce{MgB2} (\ce{B}: green, \ce{N}: gray, Mg: \ce{orange}). (c,d) Isosurface of the density difference between final (core-excited) and initial (ground) state calculations of (c) \ce{BN} and (d) \ce{MgB_2}. Isolevel = $+$0.003 (cyan) and $-$0.003 (yellow). 
  (e,f) Spherically averaged density difference from the core hole center.    
  (g,h) B K-edge XAS of (g) c-\ce{BN} and (h) \ce{MgB2}. The inset in (g) shows the result from a smaller unit cell ($N=320$), while the main trace corresponds to a large supercell ($N=2560$). 
  Experimental reference spectra (dashed line) were digitized from published figures (Refs. \cite{Jayawardane2001PRB, Zhu2002PRL}) and replotted for comparison.
  }

  \label{fig:B_K-edge}
\end{figure*}


\begin{figure}[t]
  \centering

  \includegraphics[width=\columnwidth]{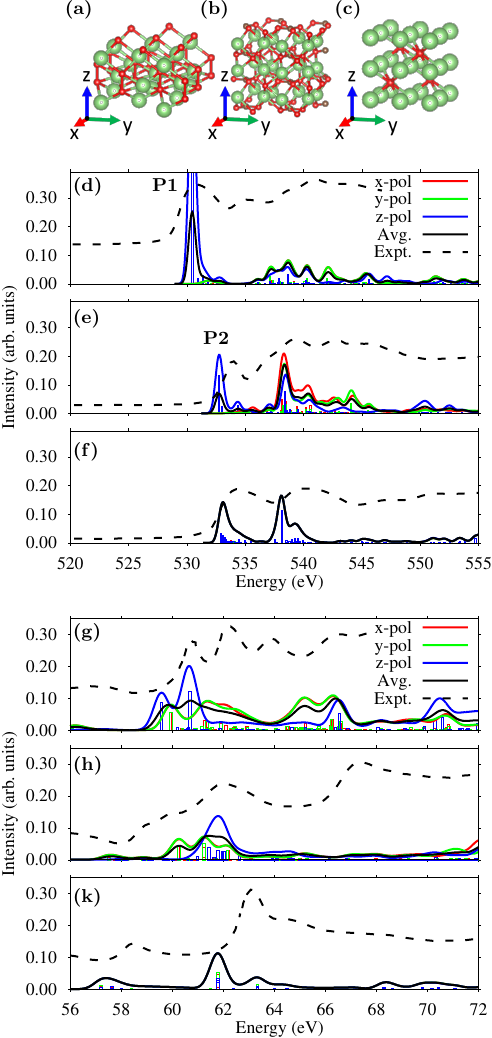}

\caption{(a--c) Crystal structures of (a) \ce{Li2O2}, (b) \ce{Li2CO3}, and (c) \ce{Li2O} (\ce{Li}: green, \ce{O}: red, \ce{C}: brown). (d--f) O K-edge and (g--k) Li K-edge XAS for the structures in (a--c), respectively. Experimental reference spectra (dashed line) were digitized from published figures (Ref. \cite{Qiao2012PLOS}) and replotted for comparison.}

  \label{fig:O_K-edge}
\end{figure}

\begin{figure}[b]
\centering

  \includegraphics[width=\columnwidth]{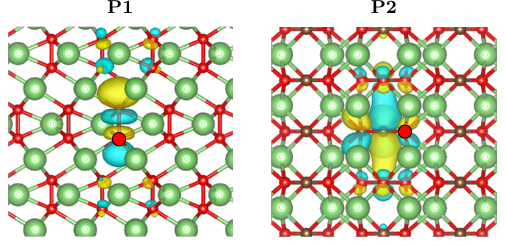}

\caption{Isosurfaces of the final KS states corresponding to peaks P1 (left)
and P2 (right) in Fig.~\ref{fig:O_K-edge}.  
The core hole is located on the highlighted O atom (red circle).  
P1 has $\sigma^*(\ce{O-O})$ character, while P2 has $\pi^*(\ce{C=O})$ character. Yellow and cyan denote opposite phases of the eigenstate (Isolevel $=\pm0.03$).}
\label{fig:O_isosurface}
\end{figure}

\paragraph*{\textbf{\upshape Carbon molecules (C K-edge).}}

For small carbon molecules, the acetylene–ethane pair provides a clean \(\pi\) vs. \(\sigma\) contrast under polarization resolution. With the molecular axis taken as \(\hat{\mathrm{x}}\), the polarized core state $\polarizedcorestate$ 
implies \(\pi^\ast(p_y,p_\mathrm{z})\) transitions for \(\boldsymbol{\hat\epsilon}\!\perp\!\hat{\mathrm{x}}\) and \(\sigma^\ast(p_\mathrm{x})\) for \(\boldsymbol{\hat\epsilon}\!\parallel\!\hat{\mathrm{x}}\).
Acetylene (C\(_2\)H\(_2\), \(N=14\); Fig.~\ref{fig:C_K-edge}(a,c)) is a prototypical linear \(\pi\) system: the cylindrical twofold degeneracy of \((p_\mathrm{y},p_\mathrm{z})\) yields strong \(\hat{\mathrm{y}}\), \(\hat{\mathrm{z}}\) absorption and a suppressed \(\hat{\mathrm{x}}\) response. The first dominant peak is unambiguously \(\pi^\ast\) (Fig.~\ref{fig:C_K-edge}(P1,P2)), while higher-energy features progressively gain \(\sigma^\ast\) admixture (Fig.~\ref{fig:O_K-edge}(P3,P4)).
By contrast, ethane (C\(_2\)H\(_6\), \(N=18\); Fig.~\ref{fig:C_K-edge}(b,d)) is a saturated \(\sigma\)-bonded molecule whose edge onset is already \(\sigma^\ast\)-dominated, and multiple peaks arise from several near-degenerate \(\sigma^\ast\) states. Among the two \(\sigma^\ast\) states in Figs.~\ref{fig:C_K-edge}(P5) and (P8), P5 lies at lower energy but shows a smaller absorption intensity, consistent with its more hole-side localized and $s$-symmetric shape, respectively.
In both molecules, the main-peak positions agree with experiment within \(|\Delta E|\lesssim 0.5\)~eV, supporting shift-free absolute alignment. The \(\hat{\mathrm{x}}/\hat{\mathrm{y}}/\hat{\mathrm{z}}\) decomposition thus exposes the symmetry–selection-rule contrast and cleanly separates the \(\pi^\ast\)-dominated case (C\(_2\)H\(_2\)) from the \(\sigma^\ast\)-dominated case (C\(_2\)H\(_6\)).

\medskip 
\noindent

\paragraph*{\textbf{\upshape Boron compounds (B K-edge).}}
In solids, \(\mathbf{k}\)-dispersion mixes transition channels, making configuration-resolved features less separable than in molecules. Equally important, the presence or absence of a band gap governs the relaxation (screening) of a core hole. This contrast is visible first in the PDOS (Figs.~\ref{fig:PDOS}(a-d)): in c-\ce{BN} (insulator, $N=2560$), the attractive potential of the B–1s core hole draws density of states to lower energies (Figs.~\ref{fig:PDOS}(a,b)); in \ce{MgB2} (metal, $N=3888$), states near  Fermi level \(E_\mathrm{F}\) efficiently screen the core hole, so the redistribution in the vicinity of \(E_\mathrm{F}\) remains comparatively small (Fig.~\ref{fig:PDOS}(c,d)). The density difference between initial (ground) and final (lowest core-excited) states, which visualizes the core hole and electron response up to the phase, is consistent with this picture (Figs.~\ref{fig:B_K-edge}(c-f)): in c-\ce{BN}, the isosurface of the density difference (isolevel$=\pm 0.003$) forms nearly isotropic alternating positive/negative shells around B and the spherical average \(\Delta\rho(r)\) with the core-hole center exhibits oscillations that persist out to \(\sim\!5~\text{\AA}\) (Figs.~\ref{fig:B_K-edge}(c,e)), whereas in \ce{MgB2} the response is anisotropic with protrusions extending along the \(c\)-axis toward \ce{Mg} layers and \(\Delta\rho(r)\) decays within \(\sim\!3~\text{\AA}\) (Figs.~\ref{fig:B_K-edge}(d,f)). These behaviors succinctly reflect the gap-controlled difference in screening between an insulator and a metal. We stress that PDOS alone cannot reproduce an absorption spectrum, due to the absence of the dipole selection rules (e.g., constraint spin channel and $\polarizedcorestate$ orbital dependency) and many-body effects; PDOS is therefore qualitative evidence of the available final-state character rather than a predictor of intensities.

Turning to XAS, Fig.~\ref{fig:B_K-edge}(g) shows the finite-cell effects with different supercell sizes. The inset ($N=320$) shows discretized peaks arising from finite-cell quantization, and the insufficient screening in too-small cells further distorts line shapes; increasing \(N\) mitigates both issues and smooths the spectrum toward the experimental envelope in the main panel ($N=2560$). On the other hand, polarization trends between two B compounds are also notable: c-\ce{BN} shows no resolvable anisotropy, consistent with cubic symmetry, while \ce{MgB2} exhibits clear anisotropy reflecting its layer structure; out-of-plane \(p_\mathrm{z}\)-like states dominate the onset while in-plane \(p_\mathrm{x,y}\) weight grows at higher energies. As a result, orientational averages recover the experimental line shapes and relative intensities once supercell-size convergence is achieved.

\medskip
\paragraph*{\textbf{\upshape Li oxides (O, Li K-edge)}}
We next turn to lithium oxides, which provide a controlled setting to test various charge states under different chemical environments. The three phases considered here differ in local O coordination and Li environment, offering distinct signatures at the O K-edge (unoccupied O-2$p$-hybridized states) and at the Li K-edge (Li-2$s$/2$p$ contributions). We focus on two metrics: (i) relative energy alignment across the three phases and between O and Li edges, and (ii) line-shape, including the onset and the first two peak intensities. As shown below, the calculated trends capture the phase-dependent shifts and anisotropies without any empirical energy shift, indicating that the present scheme is robust across chemically related oxides.
Figures \ref{fig:O_K-edge}(d)--(f) and \ref{fig:O_K-edge}(g)--(k) present the O and Li K-edge spectra, respectively, alongside the crystal structures shown in Figs. \ref{fig:O_K-edge}(a)--(c). 
For \ce{Li2O2}
($N = 2106$), the O K-edge (Figs. \ref{fig:O_K-edge}(a,d)) exhibits a sharp
leading edge that is strongest for $\hat{\mathrm{z}}$-polarization, consistent
with excitation into the antibond $\sigma^\ast(\ce{O-O})$ (Fig. \ref{fig:O_isosurface}(P1)). For
\ce{Li2CO3} ($N = 2016$) (Figs. \ref{fig:O_K-edge}(b,e)), a weak initial feature
near $\sim$ 533 eV has $\pi^\ast(\ce{C=O})$ character (Fig. \ref{fig:O_isosurface}(P2)) and is relatively
enhanced for $\hat{\mathrm{z}}$ polarization.
Both P1 and P2 correspond to core-hole-trapped excitonic states, in which the excited electron remains strongly bound to the O 1$s$ core hole in the final state. 
This bound core-exciton character accounts for the sharp leading-edge intensity and its pronounced polarization dependence.
By contrast, cubic \ce{Li2O} ($N=3072$) (Figs.~\ref{fig:O_K-edge}(c,f)) is nearly isotropic with negligible polarization dependence and shows two broad features at $\sim\!534$ and $\sim\!539~\mathrm{eV}$; although the calculated peaks are narrower, the overall two-peak motif agrees well with experiment.

For the Li K-edge, \ce{Li2O2} (Fig.~\ref{fig:O_K-edge}(g)) shows a pronounced anisotropy: a sharp onset dominated by \(\hat{\mathrm{z}}\)-polarization near \(60~\mathrm{eV}\) and $61~\mathrm{eV}$, followed by a weaker feature around \(66\text{--}68~\mathrm{eV}\). The orientational average reproduces the experimental peak positions but remains narrower. In \(\ce{Li2CO3}\) (Fig.~\ref{fig:O_K-edge}(e)), a modest \(\hat{\mathrm{z}}\)-enhanced feature appears at \(61\text{--}62~\mathrm{eV}\) with a broad shoulder above \(65~\mathrm{eV}\). The relative spacing and polarization hierarchy agree with experiment, while the computed bandwidth is underestimated. By contrast, cubic \(\ce{Li2O}\) (Fig.~\ref{fig:O_K-edge}(f)) is nearly isotropic, exhibiting a single dominant near-edge peak at \(62\text{--}63~\mathrm{eV}\) with only a weak high-energy tail; the main-edge alignment matches experiment, although the measured spectrum is broader with a stronger continuum.

\medskip

\section{Conclusion}
We establish a formally rigorous DFT framework for the lowest core-excited state, which is expected to be measured in XAS. To resolve the ambiguity inherent to core-excited DFT, we grounded the formulation in the GL theorem and introduced a penalty functional that enforces a specified core hole. Under the assumption of Eq. \eqref{eq:comm-assump} that the many-body Hamiltonian commutes with the number operator for the targeted core state, the GL theorem and the KS variational principle provide a formally exact foundation for the explicit-core $\Delta$SCF scheme, establishing the core-excited KS equation of Eq. \eqref{eq:KS_org} with a penalty term. This explicit-core treatment captures core relaxation without resorting to core-hole PPs. Because the initial and final states are treated within the same PPs and share a common energy reference, the present formulation enables shift-free absolute edge alignment across different compounds. In parallel, we reformulated the dipole matrix element using a selection-rule-aware basis (SRB), reducing the formal cost from $\mathcal{O}(N^4)$ to $\mathcal{O}(N^3)$, while providing a clear geometric interpretation. More broadly, the derivation shows how seemingly counterintuitive expansions over intermediate basis—combined with fermionic anticommutation and cancellations from prohibited double occupancy—can leverage the Pauli exclusion principle to a more compact and symmetry-transparent form. 
This computational reduction is crucial not merely for efficiency but for feasibility: it makes site-resolved XAS in large supercells and complex solids practical.
Benchmarks for \ce{C}, \ce{B}, \ce{O}, and \ce{Li} K-edges reproduce the main features of the line shapes, polarization anisotropy, and absolute energy alignment (with \ce{BN} as a notable exception), demonstrating that our calculations provide a robust, site-resolved spectroscopic fingerprint that captures the phase-dependent trends across chemically distinct sites. Future extensions include explicit $\mathbf{k}$-point sampling beyond $\Gamma$-only supercells, L/M-edge spectroscopy in non-collinear systems, and the treatment of multiple core-excited final states.

\subsection{ACKNOWLEDGMENTS}
The authors gratefully acknowledge the members of the Ozaki laboratory for constructive comments and H.~Kawai (Niigata University) for valuable discussions about pseudopotentials.
S.~A.\ acknowledges financial support from the Research Assistant (RA) program of the Institute for Solid State Physics (ISSP), The University of Tokyo.
Numerical calculations were performed in part using the supercomputer facilities of the Institute for Solid State Physics (ISSP), The University of Tokyo.

\bibliography{XAS-refs}

@article{Severino2024NatPhoton,
  author  = {Severino, Stefano and Ziems, Karl Michael and Reduzzi, Maurizio and Summers, Adam and Sun, Hao-Wei and Chien, Ying-Hao and Gr{\"a}fe, Stefanie and Biegert, Jens},
  title   = {{Attosecond core-level absorption spectroscopy reveals the electronic and nuclear dynamics of molecular ring opening}},
  journal = {Nat. Photonics},
  year    = {2024},
  volume  = {18},
  number  = {7},
  pages   = {731--737},
  doi     = {10.1038/s41566-024-01436-9}
}

@article{Liang2017PRL,
  author  = {Liang, Yufeng and Vinson, John and Pemmaraju, Sri and Drisdell, Walter S. and Shirley, Eric L. and Prendergast, David},
  title   = {{Accurate X-ray Spectral Predictions: An Advanced Self-Consistent-Field Approach Inspired by Many-Body Perturbation Theory}},
  journal = {Phys. Rev. Lett.},
  volume  = {118},
  number  = {9},
  pages   = {096402},
  year    = {2017},
  doi     = {10.1103/PhysRevLett.118.096402}
}

@article{Liang2018PRB,
  author  = {Liang, Yufeng and Prendergast, David},
  title   = {{Quantum many-body effects in X-ray spectra efficiently computed using a basic graph algorithm}},
  journal = {Phys. Rev. B},
  volume  = {97},
  number  = {20},
  pages   = {205127},
  year    = {2018},
  doi     = {10.1103/PhysRevB.97.205127}
}

@article{Roychoudhury2023PRB,
  author  = {Roychoudhury, Subhayan and Prendergast, David},
  title   = {{Efficient core-excited state orbital perspective on calculating X-ray absorption transitions in determinant framework}},
  journal = {Phys. Rev. B},
  volume  = {107},
  number  = {3},
  pages   = {035146},
  year    = {2023},
  doi     = {10.1103/PhysRevB.107.035146}
}

@article{GL,
  title   = {{Exchange and correlation in atoms, molecules, and solids by the spin-density-functional formalism}},
  author  = {Gunnarsson, O. and Lundqvist, B. I.},
  journal = {Phys. Rev. B},
  volume  = {13},
  issue   = {10},
  pages   = {4274--4298},
  year    = {1976},
  month   = {May},
  publisher = {American Physical Society},
  doi     = {10.1103/PhysRevB.13.4274},
  url     = {https://link.aps.org/doi/10.1103/PhysRevB.13.4274}
}

@article{Ozaki2017PRL,
  author  = {Ozaki, Taisuke and Lee, Chi-Cheng},
  title   = {{Absolute Binding Energies of Core Levels in Solids from First Principles}},
  journal = {Phys. Rev. Lett.},
  volume  = {118},
  number  = {2},
  pages   = {026401},
  year    = {2017},
  doi     = {10.1103/PhysRevLett.118.026401}
}

@book{deGrootKotani2008,
  author    = {de Groot, Frank M. F. and Kotani, Akio},
  title     = {{Core Level Spectroscopy of Solids}},
  publisher = {CRC Press},
  year      = {2008},
  address   = {Boca Raton},
  isbn      = {9780849390715},
  doi       = {10.1201/9781420008425},
  url       = {https://www.taylorfrancis.com/books/mono/10.1201/9781420008425/core-level-spectroscopy-solids-frank-de-groot-akio-kotani}
}

@article{Frati2020_ChemRev_O_Kedge,
  author  = {Frati, Lorenzo and Hunault, Maarten and de Groot, Frank M. F.},
  title   = {{Oxygen K-edge X-ray Absorption Spectra}},
  journal = {Chem. Rev.},
  year    = {2020},
  volume  = {120},
  number  = {9},
  pages   = {4056--4117},
  doi     = {10.1021/acs.chemrev.9b00439},
  url     = {https://pubs.acs.org/doi/10.1021/acs.chemrev.9b00439}
}

@article{JanaHerbert2023_JCTC_TP,
  author  = {Jana, Subrata and Herbert, John M.},
  title   = {{Fractional-Electron and Transition-Potential Methods for Core-to-Valence Excitation Energies Using Density Functional Theory}},
  journal = {J. Chem. Theory Comput.},
  year    = {2023},
  volume  = {19},
  number  = {11},
  pages   = {4100--4115},
  doi     = {10.1021/acs.jctc.3c00153},
  url     = {https://doi.org/10.1021/acs.jctc.3c00153}
}

@article{Woicik2020_PRB_CoreHole,
  author  = {Woicik, J. C. and Weiland, C. and Rumaiz, A. K. and Brumbach, M. T. and Ablett, J. M. and Shirley, E. L. and Kas, J. J. and Rehr, J. J.},
  title   = {{Core hole processes in X-ray absorption and photoemission by resonant Auger-electron spectroscopy and first-principles theory}},
  journal = {Phys. Rev. B},
  year    = {2020},
  volume  = {101},
  pages   = {245105},
  doi     = {10.1103/PhysRevB.101.245105},
  note    = {Published 1 June 2020}
}

@article{Meng2024_PRM_TiK_Benchmark,
  author  = {Meng, Fanchen and Maurer, Benedikt and Peschel, Fabian and Selcuk, Sencer and Hybertsen, Mark and Qu, Xiaohui and Vorwerk, Christian and Draxl, Claudia and Vinson, John and Lu, Deyu},
  title   = {{Multicode benchmark on simulated Ti K-edge X-ray absorption spectra of Ti--O compounds}},
  journal = {Phys. Rev. Mater.},
  year    = {2024},
  volume  = {8},
  number  = {1},
  pages   = {013801},
  doi     = {10.1103/PhysRevMaterials.8.013801}
}

@article{Lindblad2022_PRA_NOplus_NEXAFS,
  author  = {Lindblad, Rebecka and Kjellsson, Ludvig and De Santis, Emiliano and Zamudio-Bayer, Vicente and von Issendorff, Bernd and Sorensen, Stacey L. and Lau, J. Tobias and Hua, Weijie and Carravetta, Vincenzo and Rubensson, Jan-Erik and {\AA}gren, Hans and Couto, Rafael C.},
  title   = {{Experimental and theoretical near-edge X-ray-absorption fine-structure studies of NO}$^{+}$},
  journal = {Phys. Rev. A},
  year    = {2022},
  volume  = {106},
  number  = {4},
  pages   = {042814},
  doi     = {10.1103/PhysRevA.106.042814}
}

@article{Fan2025_AdvSci_OperandoXAS,
  author  = {Fan, Yameng and Wang, Xin and Bo, Guyue and Xu, Xun and See, Khay Wai and Johannessen, Bernt and Pang, Wei Kong},
  title   = {{Operando Synchrotron X-Ray Absorption Spectroscopy: A Key Tool for Cathode Material Studies in Next-Generation Batteries}},
  journal = {Adv. Sci.},
  year    = {2025},
  volume  = {12},
  number  = {10},
  pages   = {e2414480},
  doi     = {10.1002/advs.202414480},
  note    = {Epub 2025-01-24}
}

@article{Henderson2014_RMG_XANES,
  author  = {Henderson, G. S. and de Groot, F. M. F. and Moulton, B. J. A.},
  title   = {{X-ray Absorption Near-Edge Structure (XANES) Spectroscopy}},
  journal = {Rev. Mineral. Geochem.},
  year    = {2014},
  volume  = {78},
  number  = {1},
  pages   = {75--138},
  doi     = {10.2138/rmg.2014.78.3}
}

@article{Roychoudhury2021_ACSAMI_LiK,
  author  = {Roychoudhury, Subhayan and Zhuo, Zengqing and Qiao, Ruimin and Wan, Liwen and Liang, Yufeng and Pan, Feng and Chuang, Yi-de and Prendergast, David and Yang, Wanli},
  title   = {{Controlled Experiments and Optimized Theory of Absorption Spectra of Li Metal and Salts}},
  journal = {ACS Appl. Mater. Interfaces},
  year    = {2021},
  volume  = {13},
  number  = {38},
  pages   = {45488--45495},
  doi     = {10.1021/acsami.1c11970}
}

@article{Derricotte2015_PCCP_OCDFT_XAS,
  author  = {Derricotte, Wallace D. and Evangelista, Francesco A.},
  title   = {{Simulation of X-ray absorption spectra with orthogonality constrained density functional theory}},
  journal = {Phys. Chem. Chem. Phys.},
  year    = {2015},
  volume  = {17},
  number  = {22},
  pages   = {14360--14374},
  doi     = {10.1039/C4CP05509H}
}

@misc{OpenMX_Website,
  author       = {},
  title        = {{OpenMX: Open source package for Material eXplorer}},
  howpublished = {\url{https://www.openmx-square.org/}},
  year         = {2025},
  note         = {}
}

@article{Levy1982_PRA_ConstrainedSearch,
  author  = {Levy, Mel},
  title   = {{Electron densities in search of Hamiltonians: The constrained-search approach to density-functional theory}},
  journal = {Phys. Rev. A},
  year    = {1982},
  volume  = {26},
  number  = {3},
  pages   = {1200--1208},
  doi     = {10.1103/PhysRevA.26.1200}
}

@article{Theurich2001_PRB_SOC_PSP,
  author  = {Theurich, Gerhard and Hill, Nicola A.},
  title   = {{Self-consistent treatment of spin-orbit coupling in solids using relativistic fully separable \textit{ab initio} pseudopotentials}},
  journal = {Phys. Rev. B},
  year    = {2001},
  volume  = {64},
  pages   = {073106},
  doi     = {10.1103/PhysRevB.64.073106}
}

@article{Morrison1993_PRB_Vanderbilt_Hermitian,
  author  = {Morrison, Ian and Bylander, D. M. and Kleinman, Leonard},
  title   = {{Nonlocal Hermitian norm-conserving Vanderbilt pseudopotential}},
  journal = {Phys. Rev. B},
  year    = {1993},
  volume  = {47},
  number  = {11},
  pages   = {6728--6731},
  doi     = {10.1103/PhysRevB.47.6728}
}

@article{Ozaki2003_PRB_VariationalNAO,
  author  = {Ozaki, Taisuke},
  title   = {{Variationally optimized atomic orbitals for large-scale electronic structures}},
  journal = {Phys. Rev. B},
  year    = {2003},
  volume  = {67},
  pages   = {155108},
  doi     = {10.1103/PhysRevB.67.155108}
}

@article{Perdew1996_PRL_PBE,
  author  = {Perdew, John P. and Burke, Kieron and Ernzerhof, Matthias},
  title   = {{Generalized Gradient Approximation Made Simple}},
  journal = {Phys. Rev. Lett.},
  year    = {1996},
  volume  = {77},
  number  = {18},
  pages   = {3865--3868},
  doi     = {10.1103/PhysRevLett.77.3865},
  note    = {{Erratum: Phys. Rev. Lett. 78, 1396 (1997)}}
}

@article{Roychoudhury2022_PRB_XES_Polarization,
  author  = {Roychoudhury, Subhayan and Cunha, Leonardo A. and Head-Gordon, Martin and Prendergast, David},
  title   = {{Changes in polarization dictate necessary approximations for modeling electronic deexcitation intensity: Application to X-ray emission}},
  journal = {Phys. Rev. B},
  year    = {2022},
  volume  = {106},
  number  = {7},
  pages   = {075133},
  doi     = {10.1103/PhysRevB.106.075133}
}

@article{Preston2007DyN_SmN_PRB,
  title   = {{Comparison between experiment and calculated band structures for DyN and SmN}},
  author  = {Preston, A. R. H. and Granville, S. and Housden, D. H. and Ludbrook, B. and Ruck, B. J. and Trodahl, H. J. and Bittar, A. and Williams, G. V. M. and Downes, J. E. and DeMasi, A. and Zhang, Y. and Smith, K. E. and Lambrecht, W. R. L.},
  journal = {Phys. Rev. B},
  volume  = {76},
  pages   = {245120},
  year    = {2007},
  month   = dec,
  doi     = {10.1103/PhysRevB.76.245120}
}

@article{Timrov2020_LaFeO3_Ni_PRResearch,
  title   = {{Electronic structure of pristine and Ni-substituted LaFeO$_3$ from near edge X-ray absorption fine structure experiments and first-principles simulations}},
  author  = {Timrov, Iurii and Agrawal, Piyush and Zhang, Xinyu and Erat, Selma and Liu, Riping and Braun, Artur and Cococcioni, Matteo and Calandra, Matteo and Marzari, Nicola and Passerone, Daniele},
  journal = {Phys. Rev. Res.},
  year    = {2020},
  volume  = {2},
  pages   = {033265},
  doi     = {10.1103/PhysRevResearch.2.033265}
}

@article{Kang2005_RMnO3_PES_XAS_PRB,
  title   = {{Photoemission and X-ray absorption of the electronic structure of multiferroic RMnO$_3$ (R = Y, Er)}},
  author  = {Kang, J.-S. and Han, S. W. and Park, J.-G. and Wi, S. C. and Lee, S. S. and Kim, G. and Song, H. J. and Shin, H. J. and Jo, W. and Min, B. I.},
  journal = {Phys. Rev. B},
  volume  = {71},
  pages   = {092405},
  year    = {2005},
  month   = mar,
  doi     = {10.1103/PhysRevB.71.092405}
}

@article{Vinson2011_BSE_CoreXAS_PRB,
  title   = {{Bethe--Salpeter equation calculations of core excitation spectra}},
  author  = {Vinson, J. and Rehr, J. J. and Kas, J. J. and Shirley, E. L.},
  journal = {Phys. Rev. B},
  volume  = {83},
  pages   = {115106},
  year    = {2011},
  month   = mar,
  doi     = {10.1103/PhysRevB.83.115106}
}

@article{Vinson2012_TM_Ledge_BSE_PRB,
  title   = {{Ab initio Bethe--Salpeter calculations of the X-ray absorption spectra of transition metals at the L-shell edges}},
  author  = {Vinson, J. and Rehr, J. J.},
  journal = {Phys. Rev. B},
  volume  = {86},
  pages   = {195135},
  year    = {2012},
  month   = nov,
  doi     = {10.1103/PhysRevB.86.195135}
}

@article{Besley2010_TDDFT_CoreSpectroscopy_PCCP,
  title   = {{Time-dependent density functional theory calculations of the spectroscopy of core electrons}},
  author  = {Besley, Nicholas A. and Asmuruf, Frans A.},
  journal = {Phys. Chem. Chem. Phys.},
  year    = {2010},
  volume  = {12},
  pages   = {12024--12039},
  doi     = {10.1039/C002207A}
}

@article{Andrade2015_Octopus_RealSpace_PCCP,
  title   = {{Real-space grids and the Octopus code as tools for the development of new simulation approaches for electronic systems}},
  author  = {Andrade, Xavier and Strubbe, David A. and De Giovannini, Umberto and Larsen, Ask Hjorth and Oliveira, Micael J. T. and Alberdi-Rodriguez, Joseba and Varas, Alejandro and Theophilou, Iris and Helbig, Nicole and Verstraete, Matthieu and Stella, Lorenzo and Nogueira, Fernando and Aspuru-Guzik, Al{\'a}n and Castro, Alberto and Marques, Miguel A. L. and Rubio, {\`A}ngel},
  journal = {Phys. Chem. Chem. Phys.},
  year    = {2015},
  volume  = {17},
  pages   = {31371--31396},
  doi     = {10.1039/C5CP00351B}
}

@article{KohnSham1965,
  title   = {{Self-Consistent Equations Including Exchange and Correlation Effects}},
  author  = {Kohn, W. and Sham, L. J.},
  journal = {Phys. Rev.},
  volume  = {140},
  number  = {4A},
  pages   = {A1133--A1138},
  year    = {1965},
  month   = nov,
  doi     = {10.1103/PhysRev.140.A1133}
}

@article{Lee2017_Silicene_CoreStates_PRB,
  title   = {{Single-particle excitation of core states in epitaxial silicene}},
  author  = {Lee, Chi-Cheng and Yoshinobu, Jun and Mukai, Kozo and Yoshimoto, Shinya and Ueda, Hiroaki and Friedlein, Rainer and Fleurence, Antoine and Yamada-Takamura, Yukiko and Ozaki, Taisuke},
  journal = {Phys. Rev. B},
  volume  = {95},
  number  = {11},
  pages   = {115437},
  year    = {2017},
  doi     = {10.1103/PhysRevB.95.115437}
}

@article{Yamazaki2018_JPCC_SinglePtGraphene,
  title   = {{Atomic Structure and Local Electronic States of Single Pt Atoms Dispersed on Freestanding Graphene}},
  author  = {Yamazaki, Kenji and Maehara, Yosuke and Lee, Chi-Cheng and others},
  journal = {J. Phys. Chem. C},
  volume  = {122},
  number  = {48},
  pages   = {27292--27300},
  year    = {2018},
  doi     = {10.1021/acs.jpcc.8b04529}
}

@article{Lee2018_PRB_BoropheneBonding,
  title   = {{Peculiar bonding associated with atomic doping and hidden honeycombs in borophene}},
  author  = {Lee, Chi-Cheng and Feng, Baojie and D'Angelo, Marie and Yukawa, Ryu and Liu, Ro-Ya and Kondo, Takahiro and Kumigashira, Hiroshi and Matsuda, Iwao and Ozaki, Taisuke},
  journal = {Phys. Rev. B},
  volume  = {97},
  number  = {7},
  pages   = {075430},
  year    = {2018},
  doi     = {10.1103/PhysRevB.97.075430}
}

@article{Kurumada2024_ChemEurJ_AlIAnionDimer,
  title   = {{Mechanochemical Synthesis of Non-Solvated Alkyl-Substituted Al(I) Anion Dimer}},
  author  = {Kurumada, S. and others},
  journal = {Chem. Eur. J.},
  year    = {2024},
  doi     = {10.1002/chem.202303073}
}

@article{ChemPhysChem2023_H2Vacancies_MoS2_APXPS,
  title   = {{Hydrogen-induced Sulfur Vacancies on the MoS\textsubscript{2} Basal Plane Studied by Ambient Pressure XPS and DFT Calculations}},
  author  = {Ozaki, Fumihiko and Tanaka, Shunsuke and Choi, YoungHyun and Osada, Wataru and Mukai, Kozo and Kawamura, Mitsuaki and Fukuda, Masahiro and Horio, Masafumi and Koitaya, Takanori and Yamamoto, Susumu and Matsuda, Iwao and Ozaki, Taisuke and Yoshinobu, Jun},
  journal = {ChemPhysChem},
  year    = {2023},
  doi     = {10.1002/cphc.202300477}
}

@article{Sato2025_PRM_3x3Si_on_Al111,
  title   = {{Surface structure of the $3 \times 3$-Si phase on Al(111), studied by the multiple usages of positron diffraction and core-level photoemission spectroscopy}},
  author  = {Sato, Yusuke and Fukaya, Yuki and Nakano, Akito and Hoshi, Takeo and Lee, Chi-Cheng and Yoshimi, Kazuyoshi and Ozaki, Taisuke and Nakashima, Takeru and Ando, Yasunobu and Aoyama, Hiroaki and Abukawa, Tadashi and Tsujikawa, Yuki and Horio, Masafumi and Niibe, Masahito and Komori, Fumio and Matsuda, Iwao},
  journal = {Phys. Rev. Mater.},
  volume  = {9},
  pages   = {014002},
  year    = {2025},
  doi     = {10.1103/PhysRevMaterials.9.014002}
}

@article{PCCP2024_PdCu111_HCOOH_H2_Process,
  title   = {{Chemical process of hydrogen and formic acid on a Pd-deposited Cu(111) surface studied by high-resolution X-ray photoelectron spectroscopy and DFT}},
  author  = {Tanaka, Shunsuke and others},
  journal = {Phys. Chem. Chem. Phys.},
  year    = {2024},
  doi     = {10.1039/D4CP03942D}
}

@misc{note:remark,
  note  = {{Any state can be decomposed by two eigenstates with respect to the number operator $\hat{n}$, whose eigenvalue is $\hat{n}=1$ and $0$, respectively.
Then, its annihilation changes the $\hat{n}=1$ component to the $\hat{n}=0$ state, while the $\hat{n}=0$ component to zero. 
This procedure inevitably requires {channel separation} along given $\hat{n}$.}}
}

@article{Hohenberg1964PR,
  author  = {Hohenberg, P. and Kohn, W.},
  title   = {{Inhomogeneous Electron Gas}},
  journal = {Physical Review},
  year    = {1964},
  volume  = {136},
  number  = {3B},
  pages   = {B864--B871},
  doi     = {10.1103/PhysRev.136.B864}
}

@article{Ozaki2005_PRB_ProjectorExpansion,
  title   = {{Efficient projector expansion for the \textit{ab initio} LCAO method}},
  author  = {Ozaki, Taisuke and Kino, Hiori},
  journal = {Phys. Rev. B},
  volume  = {72},
  number  = {4},
  pages   = {045121},
  year    = {2005},
  doi     = {10.1103/PhysRevB.72.045121}
}

@article{Duy2014_CPC_3D_DomainDecomp,
  title   = {{A three-dimensional domain decomposition method for large-scale DFT electronic structure calculations}},
  author  = {Duy, Truong Vinh Truong and Ozaki, Taisuke},
  journal = {Comput. Phys. Commun.},
  volume  = {185},
  number  = {3},
  pages   = {777--789},
  year    = {2014},
  doi     = {10.1016/j.cpc.2013.11.008}
}

@article{Lejaeghere2016_Science_Reproducibility,
  title   = {{Reproducibility in density functional theory calculations of solids}},
  author  = {Lejaeghere, Kurt and Bihlmayer, Gustav and Bj{\"o}rkman, Torbj{\"o}rn and Blaha, Peter and Bl{\"u}gel, Stefan and Blum, Volker and Caliste, Damien and Castelli, Ivano Eligio and Clark, Stewart J. and Dal Corso, Andrea and {others}},
  journal = {Science},
  volume  = {351},
  number  = {6280},
  pages   = {aad3000},
  year    = {2016},
  doi     = {10.1126/science.aad3000}
}

@article{Jayawardane2001PRB,
  author  = {Jayawardane, D. N. and Pickard, Chris J. and Brown, L. M. and Payne, M. C.},
  title   = {{Cubic boron nitride: Experimental and theoretical energy-loss near-edge structure}},
  journal = {Phys. Rev. B},
  volume  = {64},
  pages   = {115107},
  year    = {2001},
  doi     = {10.1103/PhysRevB.64.115107}
}

@article{Zhu2002PRL,
  author  = {{Zhu, Y. and Moodenbaugh, A. R. and Schneider, G. and Davenport, J. W. and Vogt, T. and Li, Q. and Gu, G. and Fischer, D. A. and Taft{\o}, J.}},
  title   = {Unraveling the symmetry of the hole states near the Fermi level in the {MgB\(_2\)} superconductor},
  journal = {Phys. Rev. Lett.},
  volume  = {88},
  number  = {24},
  pages   = {247002},
  year    = {2002},
  doi     = {10.1103/PhysRevLett.88.247002}
}

@article{BesleyNoble2007JPCC,
  author  = {Besley, Nicholas A. and Noble, Adam},
  title   = {{Time-Dependent Density Functional Theory Study of the X-ray Absorption Spectroscopy of Acetylene, Ethylene, and Benzene on Si(100)}},
  journal = {J. Phys. Chem. C},
  volume  = {111},
  number  = {8},
  pages   = {3333--3340},
  year    = {2007},
  doi     = {10.1021/jp065160x}
}

@article{Qiao2012PLOS,
  author  = {Qiao, Ruimin and Chuang, Yi-De and Yan, Shishen and Yang, Wanli},
  title   = {{Soft X-Ray Irradiation Effects of {Li\(_2\)O\(_2\)}, {Li\(_2\)CO\(_3\)} and {Li\(_2\)O} Revealed by Absorption Spectroscopy}},
  journal = {PLOS ONE},
  volume  = {7},
  number  = {11},
  pages   = {e49182},
  year    = {2012},
  doi     = {10.1371/journal.pone.0049182}
}

\end{document}